\documentclass{article}
\usepackage{fullpage}

\usepackage[utf8]{inputenc}
\usepackage{amsthm,amssymb}
\usepackage{amsmath}
\usepackage{thmtools,mathtools}
\usepackage{caption,subcaption}
\usepackage{epsfig,graphicx,graphics,color,tikz,tikz-cd}
\usepackage [linesnumbered, ruled,vlined]{algorithm2e}
\usetikzlibrary{calc,decorations.pathmorphing,shapes}
\usepackage[most]{tcolorbox}
\usepackage{array,makecell,multirow}
\setcellgapes{4pt}
\usepackage{bbm}
\usepackage{varwidth}
\usepackage{rotating}
\usepackage{booktabs}
\usepackage[mathcal,mathscr]{eucal}
\usepackage{varwidth}
\usepackage{enumerate,enumitem}
\usepackage[nocompress,sort]{cite}
\usepackage{xspace}
\usepackage{bm}
\usepackage{csquotes}
\usepackage{hyperref}
\hypersetup{
colorlinks=true,       
linkcolor=blue,        
citecolor=magenta,         
filecolor=magenta,     
urlcolor=cyan,         
linktoc=all
}

\theoremstyle{plain}
\newtheorem{thm}{Theorem}[section]
\newtheorem{lem}[thm]{Lemma}

\newtheorem{cl}[thm]{Claim}

\theoremstyle{definition}

\newtheorem{rem}[thm]{Remark}
\newtheorem{case}{Case}
\newtheorem{subcase}{Subcase}[case]

\usepackage{etoolbox}
\AtBeginEnvironment{thm}{\setcounter{case}{0}}
\AtBeginEnvironment{lem}{\setcounter{case}{0}}

\def\final{0}  
\def\iflong{\iffalse}
\ifnum\final=0  

\else 
\newcommand{\knote}[1]{}
\newcommand{\ynote}[1]{}
\fi

\makeatletter
\renewcommand{\paragraph}{%
  \@startsection{paragraph}{4}%
  {\z@}{1.5ex \@plus 1ex \@minus .2ex}{-1em}%
  {\normalfont\normalsize\bfseries}%
}
\makeatother

\newcommand{\bZ}{\mathbb{Z}}

\newcommand{\cG}{\mathcal{G}}

\newcommand{\cI}{\mathcal{I}}

\newcommand{\cM}{{\bm{M}}}

\newcommand{\maxpt}{\textsc{Max-PT}\xspace}
\newcommand{\pst}{\textsc{PST}\xspace}
\newcommand{\opt}{\textsc{Opt}\xspace}

\definecolor[named]{Blue}{cmyk}{1,0.1,0,0.1}
\definecolor[named]{Yellow}{cmyk}{0,0.16,1,0}
\definecolor[named]{Orange}{cmyk}{0,0.42,1,0.01}
\definecolor[named]{Red}{cmyk}{0,0.90,0.86,0}
\definecolor[named]{LightBlue}{cmyk}{0.49,0.01,0,0}
\definecolor[named]{Green}{cmyk}{0.20,0,1,0.19}

\SetKwFor{Whileblock}{while}{do}{}
\SetKwBlock{Begin}{}{}
\SetVlineSkip{1.5pt}
\makeatletter
\newcommand{\linkdest}[1]{\Hy@raisedlink{\hypertarget{#1}{}}}
\makeatother

\newlength{\bibitemsep}
\newlength{\bibparskip}
\let\oldthebibliography\thebibliography
\renewcommand\thebibliography[1]{%
\oldthebibliography{#1}%
\setlength{\parskip}{\bibitemsep}%
\setlength{\itemsep}{\bibparskip}%
}

\usepackage{titling}
\thanksmarkseries{alph}

\title{Above-Guarantee Algorithm
for Properly Colored Trees}

\author{
Yuhang Bai\thanks{School of Mathematics and Statistics, Northwestern Polytechnical University and Xi'an-Budapest Joint Research Center for Combinatorics, Xi'an 710129,
Shaanxi, People's Republic of China. Email: \texttt{yhbai@mail.nwpu.edu.cn}.}
\and
Kristóf Bérczi\thanks{MTA-ELTE Matroid Optimization Research Group and HUN-REN–ELTE Egerváry Research Group, Department of Operations Research, ELTE Eötvös Loránd University, and HUN-REN Alfréd Rényi Institute of Mathematics, Budapest, Hungary. Email: \texttt{kristof.berczi@ttk.elte.hu}.}
}

\date{}

\begin{document}

\maketitle

\begin{abstract}
In the \textit{Properly Colored Spanning Tree} problem, we are given an edge-colored undirected graph and the goal is to find a spanning tree in which any two adjacent edges have distinct colors. Since finding such a tree is NP-hard in general, previous work often relied on minimum color degree conditions to guarantee the existence of properly colored spanning trees. While it is known that every connected edge-colored graph $G$ contains a properly colored tree of order at least $\min\{|V(G)|, 2\delta^c(G)\}$, where $\delta^c(G)$ denotes the minimum number of colors incident to a vertex, we study the algorithmic above-guarantee problem for properly colored trees. We provide a polynomial-time algorithm that constructs a properly colored tree of order at least $\min\{|V(G)|, 2\delta^c(G)+1\}$ in a connected edge-colored graph $G$, whenever such a tree exists.

\medskip

\noindent \textbf{Keywords:} Properly colored tree, Above-guarantee problem, Polynomial-time algorithm
\end{abstract}

\section{Introduction}

All graphs considered in this paper are finite and simple. 
A \emph{$k$-edge-colored graph} is a graph $G=(V,E)$ with a coloring $c\colon E\to [k]$ of its edges by $k$ colors.
For a vertex $v\in V$, we use $N_G(v)$ to denote the set of all neighbors of $v$ in $G$.
The \emph{color degree} of $v$, denoted by $d_G^c(v) = \bigl|\{c(vu)\colon u\in N_G(v)\}\bigr|$.
The \emph{minimum color degree} of $G$ is $\delta^c(G) = \min_{v\in V} d_G^c(v)$.
We refer to a graph that is $k$-edge-colored for some $k\in\bZ_+$ as \emph{edge-colored}.
A subgraph $H$ of $G$ is called \emph{rainbow} if no two edges of $H$ have the same color, and \emph{properly colored} if any two adjacent edges of $H$ have distinct colors.

Degree conditions for spanning structures are a central topic in extremal graph theory. A classical example is Dirac's theorem~\cite{dirac1952some} on Hamiltonian cycles. Recently, Fomin, Golovach, Sagunov, and Simonov~\cite{fomin2022algorithmic} studied an algorithmic extension of Dirac's theorem. Specifically, they showed that deciding whether a 2-connected graph $G$ contains a cycle of length at least $\min\{|V(G)|, 2\delta(G) + k\}$ is fixed-parameter tractable when parameterized by $k$. This framework is referred to as an \textit{above-guarantee problem} in parameterized complexity. The general idea of this paradigm is that the natural parameterization of a maximization problem by its solution size is often unsatisfactory if a sufficiently large lower bound on that size is already guaranteed.

Building on this success, it is natural to ask whether similar guarantees exist in edge-colored graphs. Since a rainbow forest is a common independent set of two matroids, i.e., the partition matroid defined by the color classes and the graphic matroid of the graph, a rainbow forest of maximum size can be found in polynomial time using Edmonds' celebrated matroid intersection algorithm~\cite{edmonds1970submodular}. However, much less is known about the properly colored case. In~\cite{borozan2019maximum}, Borozan, de La Vega, Manoussakis, Martinhon, Muthu, Pham, and Saad initiated the study of properly colored spanning trees of edge-colored graphs and introduced the \emph{Properly Colored Spanning Tree} problem (\pst). 
A simple motivation for properly colored trees is local conflict avoidance. If edge colors represent resources such as frequencies, channels, or labels, then a properly colored spanning tree is a connected backbone in which no vertex uses the same resource on two selected incident edges. 
This problem generalizes the well-known bounded-degree spanning tree problem for uncolored graphs, since the number of colors effectively limits the degree of each vertex. It also generalizes the properly colored Hamiltonian path problem when the number of colors is restricted to two. Since both of these problems are NP-complete, finding a properly colored spanning tree is computationally hard in general. 

To overcome this computational hardness, we use the minimum color degree to guarantee the existence of large properly colored structures.
Hu, Li, and Maezawa~\cite{hu2022maximum} proved that every connected edge-colored graph $G$ contains a properly colored tree of order at least $\min\{|V(G)|, 2\delta^c(G)\}$.
This bound is tight, since some connected edge-colored graphs do not contain a properly colored tree of order $\min\{|V(G)|, 2\delta^c(G)+1\}$.
They also characterized the corresponding families of extremal graphs.
However, their proof does not yield a constructive algorithm.
Motivated by the work of Fomin et al.~\cite{fomin2022algorithmic}, we study the algorithmic above-guarantee problem: 
given a connected edge-colored graph $G$, can we decide in polynomial time whether it contains a properly colored tree of order at least $\min\{|V(G)|, 2\delta^c(G)+1\}$, and output one if it exists.

\subsection{Related work and motivation}
\label{sec:related}

Finding properly colored spanning trees in graphs is closely related to constrained spanning tree problems, or in a more general context, to the problem of finding a basis of a matroid subject to further matroid constraints. In what follows, we give an overview of questions that motivated our investigations. 

\paragraph{Properly colored trees}

Properly colored spanning trees were first considered in \cite{borozan2019maximum} where their existence was studied from both a graph-theoretic and an algorithmic perspective.
They showed that the problem of finding a properly colored spanning tree remains NP-complete even for complete graphs.
Deciding the existence of a properly colored spanning tree is hard in general, hence, a considerable amount of work has focused on finding sufficient conditions~\cite{cheng2020properly,kano2020color,kano2021rainbow}.
Since a properly colored spanning tree may not exist, it is natural to ask for the maximum size of a properly colored tree not necessarily spanning all the vertices, called the \emph{Maximum-size Properly Colored Tree} problem (\maxpt).
Borozan et al.~\cite{borozan2019maximum} proved that \maxpt is hard to approximate within a factor of $55/56+\varepsilon$ for any $\varepsilon>0$, although they provided polynomial-time algorithms for graphs not containing properly colored cycles.
Recently, Bai, Bérczi, Csáji, and Schwarcz~\cite{bai2026approximating} showed that \maxpt is NP-hard to approximate within a factor of $1/n^{1-\varepsilon}$ for any $\varepsilon >0$, even for instances containing a properly colored spanning tree.
Furthermore, for any fixed constant $\varepsilon >0$, they provided a polynomial-time algorithm that achieves a $1/\sqrt{(2+\varepsilon)(n-1)}$-approximation for \maxpt on complete multigraphs on $n$ vertices.
Hu et al.~\cite{hu2022maximum} proved that the maximum order of a properly colored tree in an edge-colored connected graph $G$ is at least $\min \{|V(G)|, 2 \delta^c(G)\}$.

\paragraph{Above-guarantee problems}
The study of parameterized problems above or below guarantees is a well-established paradigm in parameterized complexity; see the survey by Gutin and Mnich~\cite{gutin2025survey}.
Following this line of research, Fomin et al.~\cite{fomin2022algorithmic} gave an FPT algorithm, parameterized by $k$, to decide whether a 2-connected graph $G$ contains a cycle of length at least $\min\{|V(G)|, 2\delta(G)+k\}$.
Later, Fomin, Golovach, Sagunov, and Simonov~\cite{fomin2024longest} provided an FPT algorithm, parameterized by $k$, to decide whether a 2-connected graph $G$ contains a cycle of length at least $\mathrm{mad}(G)+k$, where $\mathrm{mad}(G)$ denotes the maximum average degree of $G$.
In addition, the same authors~\cite{fomin2024approxdirac} showed that given any polynomial-time algorithm finding a cycle of length $f(L)$ for a subadditive function $f$, one can construct a cycle of length at least $2\delta(G)+\Omega(f(L-2\delta(G)))$ in a 2-connected graph $G$.
For trees, they~\cite{fomin2025tree} gave a randomized FPT algorithm parameterized by $k$ to decide whether a graph contains a given tree of order at most $\delta(G)+k$.

Motivated by the extremal bound of Hu et al.~\cite{hu2022maximum} and the above-guarantee paradigm, we study properly colored trees beyond this bound. Since the bound is tight, finding a tree of order at least $\min\{|V(G)|, 2\delta^c(G)+1\}$ is the first nontrivial step above the guarantee. This task is challenging. Unlike the uncolored cycles considered in Dirac-type problems, we seek a tree subject to local color constraints. Moreover, beyond the strong inapproximability of \maxpt discussed earlier, even finding a maximum properly colored forest is a nontrivial special case of matroid 3-parity; see~\cite{bai2025approximating}. While above-guarantee problems for uncolored cycles are fixed-parameter tractable, achieving the threshold $\min\{|V(G)|, 2\delta^c(G)+1\}$ in our edge-colored setting requires new algorithmic ideas.

\subsection{Our results}
\label{sec:results}

Our contribution is twofold. First, we show that finding a maximum-size rainbow tree is NP-hard even in connected star-colored graphs (Theorem~\ref{thm:NP}). Second, we give a polynomial-time algorithm that constructs a rainbow tree of order at least $\min\{|V(G)|, 2\delta^c(G)+1\}$ in star-colored connected graphs, provided such a tree exists (Theorem~\ref{thm:rb-star}). We extend this result to general connected edge-colored graphs by giving a polynomial-time algorithm that decides whether a properly colored tree of order at least $\min\{|V(G)|, 2\delta^c(G)+1\}$ exists and outputs one if so (Theorem~\ref{thm:main}).

\paragraph{Paper organization.}

The paper is organized as follows.
In Section~\ref{sec:prelim}, we introduce basic definitions and notation, and overview some results that we will use in our proofs.
In Section~\ref{sec:hardness}, we discuss the complexity of the maximum-size rainbow tree problem in star-colored graphs.
Section~\ref{sec:algo} presents our main algorithms.

\section{Preliminaries}
\label{sec:prelim}

\paragraph{Basic notation.}

We denote the set of \emph{positive integers} by $\bZ_+$. For a positive integer $k$, we use $[k]\coloneqq\{1,\dots,k\}$. Given a ground set $S$, the \emph{difference} of $X,Y\subseteq S$ is denoted by $X\setminus Y$. If $Y$ consists of a single element $y$, then $X\setminus \{y\}$ and $X\cup \{y\}$ are abbreviated as $X-y$ and $X+y$, respectively. Similarly, the single element set $\{y\}$ is often denoted by $y$.

All graphs considered in this paper are finite and simple. 
For a graph $G$, we denote its vertex set and edge set by $V(G)$ and $E(G)$, respectively.
A graph is \emph{simple} if it has no parallel edges and no loops. A simple graph is \emph{complete} if it contains exactly one edge between any pair of vertices. 
For a graph $G$, we write $|G|=|V(G)|$ for its \emph{order}.
Let $G=(V,E)$ be a graph, and $X\subseteq V$ be a subset of vertices.
The \emph{subgraph of $G$ induced by $X$} is denoted by $G[X]$.
The \emph{graph obtained by deleting $X$} is denoted by $G-X$.
Denote $\partial(X)=\{xy\in E(G)\colon x\in X, y\notin X\}$ for vertex set $X\subseteq V$.

Let $c\colon E\to[k]$ be an edge-coloring of $G$ using $k$ colors.
The function $c$ is extended to subsets of edges where, for a subset $F\subseteq E$ of edges, $c(F)$ denotes the set of colors appearing on the edges in $F$.
For an edge-colored graph $G=(V,E)$, we use $E_i=\{ e\in E\mid c(e)=i\}$ to denote the set of edges of color $i$.
An edge-colored graph is called \emph{star-colored} if $E_i$ is a star for each color $i\in[k]$.

\paragraph{Extremal graphs.}

Hu et al.~\cite{hu2022maximum} introduced six families of edge-colored graphs, denoted $\cG^1, \dots, \cG^6$, based on six specific graphs, see Figure~\ref{fig:1} for an example. 
For completeness, we recall their definitions; we present the family $\cG^1$ in the tournament-based form that will be used in our arguments.
\begin{figure}[t!]
\centering
    \includegraphics[width=0.9\textwidth]{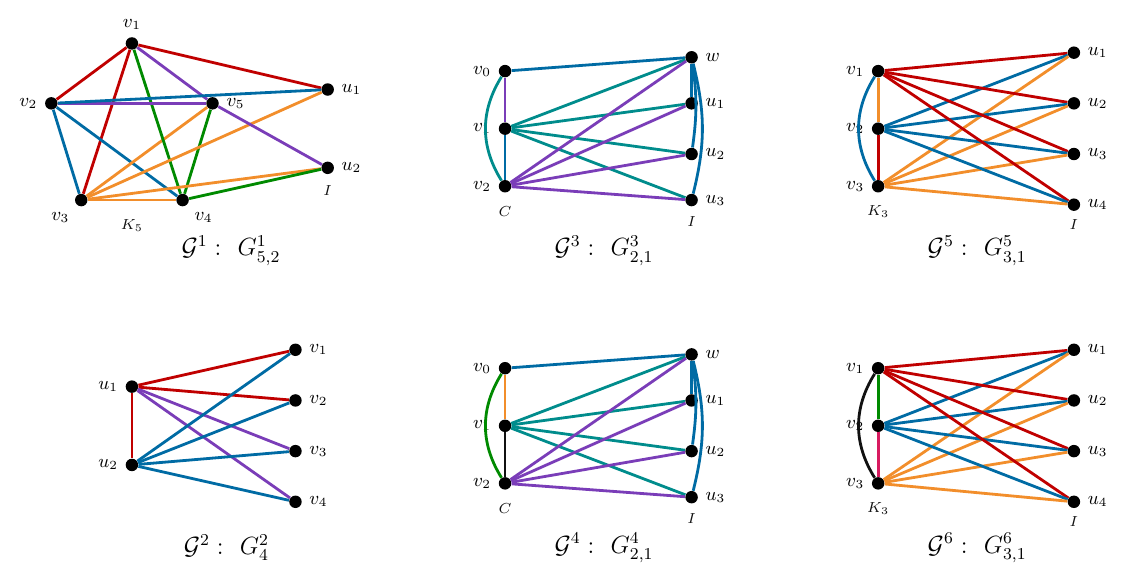}
\caption{Illustration of the extremal graphs.}
\label{fig:1}
\end{figure}

\medskip

\noindent \textit{Graph $G^1_{m,k}$.} Let $m \ge 3$ be odd and $k \ge 2$, and let $c_1, \dots, c_m$ be pairwise distinct colors. 
Let $K_m$ be an edge-colored complete graph on the vertex set $\{v_1,\ldots,v_m\}$ such that there exists a regular tournament $D$ on $V(K_m)$ satisfying $c(v_i v_j)=c_i$ whenever the arc $(v_i,v_j)$ belongs to $D$. 
Let $I = \{u_1, \dots, u_k\}$ be an independent set disjoint from $V(K_m)$. For each $j \in [k]$, choose a subset $U_j \subseteq V(K_m)$ with $|U_j| \ge (m+1)/2$, and for each $v_i \in U_j$, add the edge $v_i u_j$ with color $c_i$. The resulting graph is $G^1_{m,k}$. 

\medskip

\noindent \textit{Graph $G^2_k$.} Let $k \ge 3$ and vertices $u_1, u_2, v_1, \dots, v_k$ be distinct. Let $c_1, c'_1, c_2$ be pairwise distinct colors. Partition ${v_1, \dots, v_k}$ into sets $V_1$ and $V'_1$, with $V'_1 \neq \emptyset$. Add edges $u_2 v_i$ for all $i \in [k]$ with color $c_2$, edges $u_1 v_i$ for $v_i \in V_1$ with color $c_1$, and edges $u_1 v_i$ for $v_i \in V'_1$ with color $c'_1$. Finally, add the edge $u_1 u_2$ with color $c_1$. There are no other edges. The resulting graph is $G^2_k$.

\medskip

\noindent \textit{Graph $G^3_{m,k}$.} Let $m,k \ge 1$, $C = \{v_0, \dots, v_m\}$ induce a properly colored $K_{m+1}$, $w \notin C$, and $I = \{u_1, \dots, u_{m+k}\}$ be independent with $I \cap (C +w) = \emptyset$. Let $c, c_1, \dots, c_m$ be pairwise distinct colors, with $c$ not on edges incident to $v_0$ in $C$ and $c_i$ not on edges incident to $v_i$ in $C$. Add edges $v_0 w$ and $w u_j$ for $j \in [m+k]$ with color $c$. For each $i \in [m]$, add edges $v_i w$ and $v_i u_j$ for $j \in [m+k]$ with color $c_i$. No edges are added between $v_0$ and $I$. The resulting graph is $G^3_{m,k}$.

\medskip

\noindent\textit{Graph $G^4_{m,k}$.} Construct a graph as in the definition of $G^3_{m,k}$, except that $C$ is a rainbow complete graph whose edge colors are disjoint from $\{c, c_1, \dots, c_m\}$. The resulting graph is denoted by $G^4_{m,k}$.

\medskip

\noindent\textit{Graph $G^5_{m,k}$.} Let $m,k \ge 1$. Let $K_m$ be a properly colored complete graph on vertices $v_1, \dots, v_m$, and let $I = \{u_1, \dots, u_{m+k}\}$ be an independent set disjoint from $V(K_m)$. Let $c_1, \dots, c_m$ be pairwise distinct colors, each absent from edges incident to the corresponding vertex $v_i$ in $K_m$. For every $i \in [m]$ and $j \in [m+k]$, add the edge $v_i u_j$ with color $c_i$. The resulting graph is denoted by $G^5_{m,k}$.

\medskip

\noindent \textit{Graph $G^6_{m,k}$.} Construct the graph as in the definition of $G^5_{m,k}$, except that the complete graph on $v_1, \dots, v_m$ is rainbow and its edge colors are disjoint from $\{c_1, \dots, c_m\}$. The resulting graph is denoted by $G^6_{m,k}$.

\medskip

Finally, the graph families are defined as
\[
\cG^1 = \{G \colon G \cong G^1_{m,k} \text{ for some odd } m \ge 3 \text{ and } k \ge 2\},
\]
\[
\cG^2 = \{G \colon G \cong G^2_k \text{ for some } k \ge 3\},
\]
and for each $i \in \{3,4,5,6\}$,
\[
\cG^i = \{G \colon G \cong G^i_{m,k} \text{ for some } m \ge 1 \text{ and } k \ge 1\}.
\]

For later use, we note that $\delta^c(G^1_{m,k})=(m+1)/2$, $\delta^c(G^2_k)=2$, $\delta^c(G^3_{m,k})=\delta^c(G^4_{m,k})=m+1$, and $\delta^c(G^5_{m,k})=\delta^c(G^6_{m,k})=m$.

\begin{rem}\label{re:extre-test}
The graph in each of the families $\mathcal G^1,\ldots,\mathcal G^6$ can be check in polynomial time.
For $\mathcal G^2,\ldots,\mathcal G^6$, the underlying uncolored graph is a
complete split graph. Thus the required partition is determined, except in $\mathcal G^3$ and $\mathcal G^4$, where we additionally try all $O(n^2)$ choices of the special vertices $w$ and
$v_0$. For each candidate, all adjacency and color conditions in the definitions can be checked by scanning the edge set.

For $\mathcal G^1$, candidate split partitions $K\cup I$ can be enumerated in polynomial time, and the adjacency and degree conditions are then checked directly. The remaining tournament-coloring condition reduces to choosing, for each $v\in K$, one of at most two possible colors $\gamma(v)$. Introduce one Boolean variable for each vertex with two possible choices. Whenever a pair of choices for two vertices $u,v\in K$ violates either $\gamma(u)\ne \gamma(v)$ or $c(uv)\in{\gamma(u),\gamma(v)}$, we add the corresponding 2-SAT clause forbidding this pair. Hence the compatibility constraints form a polynomial-size 2-SAT instance. Once the colors $\gamma(v)$ are chosen, the orientation of every edge $uv\in E(K)$ is determined by the endpoint whose chosen color equals $c(uv)$, and regularity of the resulting tournament is checked directly. Thus membership in $\mathcal G^1$ can be tested in polynomial time.
\end{rem}

We will use the following known results.

\begin{thm}[Hu et al.~\cite{hu2022maximum}]\label{thm:PC-ex}
    Let $G$ be an edge-colored connected graph. The graph $G$ does not contain a properly colored tree of order at least $\min \left\{|G|, 2 \delta^{c}(G)+1\right\}$ if and only if $G \in \mathcal{G}^1 \cup \mathcal{G}^2 \cup \mathcal{G}^3 \cup \mathcal{G}^5$.
\end{thm}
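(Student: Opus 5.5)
The plan is to prove the two directions separately. The ``if'' direction is a direct check on each family, and the ``only if'' direction is an extremal analysis of a maximum properly colored tree.

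\emph{Sufficiency.} For each family, I first compute $\delta^c$ as recorded above. Then I show that every properly colored tree $T$ has order at most $2\delta^c(G)$, which is less than $|G|$ because $k\ge 1$ (or $k\ge 2$, $k\ge 3$). The common mechanism is a charging argument on the independent set $I$.
\begin{itemize}
\item In $G^5_{m,k}$ and $G^6_{m,k}$, every edge between $v_i$ and $I$ has color $c_i$, so $v_i$ uses at most one such edge in $T$. Every vertex of $I$ in $T$ is attached only through such edges. Orienting $T$ toward a root shows that $|V(T)\cap I|\le m$, so $|T|\le 2m=2\delta^c$.
\item In $G^3$ and $G^4$, the same count applies with the extra vertex $w$. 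The hub $w$ uses color $c$ on all its edges, and $v_0$ has no edges to $I$. Together these give $|T|\le 2(m+1)$.
\item In $G^1_{m,k}$, each $v_i$ carries color $c_i$ both on its out-arcs in $D$ and on its edges to $I$. So $v_i$ has at most one $T$-edge that is an out-arc or goes to $I$. Counting the edges of $T$, each is charged to a $v_i$ via its color $c_i$, which gives $|E(T)|\le m$ and $|T|\le m+1=2\delta^c$.
\item For $G^2_k$, all edges at $u_2$ have color $c_2$ and each $v_i$ has degree $2$. A short case check bounds $|T|$ by $4$.
\end{itemize}

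\emph{Necessity.} Let $d=\delta^c(G)$. Assume $G$ is connected and has no properly colored tree of order at least $\min\{|G|,2d+1\}$. I would first reprove the weaker bound $2d$ along the way. Take a properly colored tree $T$ of maximum order with $|T|\le 2d<|G|$. Among all such trees, choose one that maximizes a secondary parameter, for instance the number of leaves or the sum of color degrees inside $T$.

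Key local facts come from maximality:
\begin{enumerate}
\item For a leaf $x$ of $T$ with pendant edge color $\alpha$, every edge from $x$ to $V\setminus V(T)$ has color $\alpha$.
\item For an internal vertex $x$, every edge from $x$ to outside has a color already used at $x$ in $T$.
\item For a vertex $u\notin V(T)$ adjacent to $T$, each edge $ux$ has a color already present at $x$ in $T$.
\end{enumerate}
Combining these with $d^c(u)\ge d$ and $d^c(x)\ge d$, I count the colors that must be ``blocked'' inside $T$. This forces $|T|\ge 2d$, so $|T|=2d$, and every inequality in the count must be tight. Tightness should force the following structure: $V(T)$ splits into a set $K$ of $m$ ``hub'' vertices and a set of $m$ (or $m\pm1$) leaves, each hub $v_i$ has a private color $c_i$ used on all its edges toward the leaf side, and every vertex outside $T$ is independent and sees $K$ only via the colors $c_i$.

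I would then perform exchange operations to pin down the structure. These include swapping a leaf for an outside vertex, and rerouting a pendant edge to a hub through a different color. Such swaps show that the leaf side together with the outside vertices forms an independent set $I$, and that $K$ is a clique. The next step is to split into cases according to how the colors on $K$ interact with the $c_i$.
\begin{itemize}
\item If no $c_i$ appears inside $K$, then $K$ must be properly colored and we land in $\mathcal G^5$. An extra vertex $w$ with a single color $c$ gives $\mathcal G^3$.
\item If the $c_i$ do appear inside $K$, then each edge $v_iv_j$ of $K$ must carry $c_i$ or $c_j$, since otherwise a longer tree appears. This defines a tournament, and the degree condition $d=(m+1)/2$ forces it to be regular, giving $\mathcal G^1$.
\item The degenerate case $d=2$ with a non-clique core produces $\mathcal G^2$.
\end{itemize}
Rainbow cores, as in $\mathcal G^4$ and $\mathcal G^6$, are excluded here because they admit a longer tree. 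For example, a rainbow core can route through $K$ differently and gain a vertex.

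The main obstacle will be this tightness and case analysis. The hardest part is proving that every outside vertex attaches only through the private colors $c_i$, and ruling out mixed configurations in which some hub colors appear inside $K$ while others do not. There I would first try a careful choice of the extremal tree, namely maximizing the number of hubs whose private color is unused inside $T$. I would then argue by a single edge-exchange that any mixed configuration yields either a tree of order $2d+1$ or a tree contradicting this choice.
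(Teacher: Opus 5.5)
Your sufficiency direction is essentially right, apart from one fix noted at the end. The necessity direction has a genuine gap at its first quantitative step, and the case analysis that follows contains false claims. The paper itself does not prove this statement; it imports it from Hu et al. The closest in-paper argument is the proof of Theorem~\ref{thm:rb-star} for star-colored graphs.

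\textbf{The bound $|T|\ge 2d$ does not follow from your facts.} You claim that facts 1--3, together with $d^c\ge d$, force $|T|\ge 2d$. They cannot, because facts 1--3 only say that $T$ admits no one-edge extension, and such trees can be far smaller than $2d$. Take $G^5_{m,k}$ with $m\ge 2$ and the star centered at $u_1$ with leaves $v_1,\dots,v_m$.
\begin{itemize}
\item It is properly colored.
\item Each leaf $v_i$ sends only color-$c_i$ edges outside, and $u_1$ has no outside neighbor.
\item Each outside $u_j$ reaches $v_i$ only by $c_i$, which is already used at $v_i$.
\item All color degrees are at least $m$.
\end{itemize}
Yet its order is $m+1<2m$. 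So any count using only these premises would prove a false statement. Maximality of $|T|$ has to enter through simultaneous re-routing of several edges, and that is the missing idea.

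The paper's rainbow analogue supplies exactly this. A tree with no augmenting path in the exchange digraph yields a set $R$ of reachable colors with two properties. Deleting the colors of $R$ splits $G[V(T)]$ into exactly $|R|+1$ components, and every edge of $\partial(V(T))$ has a color in $R$ (Claim~\ref{cl:rbmt}). The $2\delta^c$ bound and the whole tightness analysis (centers $x_i$, components $W_i$) are driven by this certificate, not by one-edge maximality. You have no substitute for it. The structure you then describe (hub set, independent outside, clique) is only asserted as what tightness ``should force''.

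\textbf{Two case-analysis assertions are false}, so the classification would not close as planned.
\begin{enumerate}
\item Rainbow cores are not excluded. A rainbow clique whose colors avoid the hub colors satisfies the definitions of $\mathcal G^3$ and $\mathcal G^5$. Hence $\mathcal G^4\subseteq\mathcal G^3$ and $\mathcal G^6\subseteq\mathcal G^5$. Your own first sufficiency bullet shows that $G^6_{m,k}$ has no properly colored tree on $2m+1$ vertices. The only difference from Theorem~\ref{thm:Rb-ex} is that star-coloredness forces a properly colored core to be rainbow.
\item The rule ``if the $c_i$ appear inside $K$, each edge $v_iv_j$ must carry $c_i$ or $c_j$'' fails. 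In $\mathcal G^5$ a hub color may sit on a clique edge not incident to its hub. For example, take $m=3$ with $c(v_1v_2)=c_3$ and fresh colors on $v_1v_3$ and $v_2v_3$. This graph is extremal (every properly colored tree has order at most $6$), but $v_1v_3$ carries neither $c_1$ nor $c_3$. The correct dichotomy is whether $c_i$ occurs on clique edges \emph{at} $v_i$. Ruling out mixed configurations is precisely the part you leave open.
\end{enumerate}

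Your template of $m$ hubs and about $m$ leaves also does not fit $\mathcal G^1$. There, every maximum tree consists of all $m$ clique vertices and a single vertex of $I$.

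\textbf{Fix for the $\mathcal G^3$ sufficiency bullet.} The vertex $w$ does not use $c$ on all its edges, since $wv_i$ has color $c_i$. Counting $I$-vertices per hub only gives $|T|\le 2m+3$. Instead, charge each edge outside $C$ to the endpoint whose private color it carries; each of $w,v_1,\dots,v_m$ is charged at most once. This gives $|T|-1\le(|V(T)\cap C|-1)+(m+1)$, hence $|T|\le 2m+2$.
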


\begin{thm}[Hu et al.~\cite{hu2022maximum}]\label{thm:Rb-ex}
    Let $G$ be a star-colored connected graph. 
    The graph  $G$ does not contain a rainbow tree of order at least $\min \left\{|G|, 2 \delta^{c}(G)+1\right\}$ if and only if $G \in \mathcal{G}^1 \cup \mathcal{G}^2 \cup \mathcal{G}^4 \cup \mathcal{G}^6$.
\end{thm}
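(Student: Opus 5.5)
The plan is to derive this statement from Theorem~\ref{thm:PC-ex} instead of redoing the extremal analysis of Hu et al. The key observation is that \emph{in a star-colored graph, a subgraph is properly colored if and only if it is rainbow}. One direction holds in every edge-colored graph. For the other, let $e,f$ be two distinct edges of the same color $i$. Since $E_i$ is a star, $e$ and $f$ share its center, so they are adjacent. A properly colored subgraph therefore never contains two edges of the same color. Hence for a star-colored connected graph $G$, the statement ``$G$ has no rainbow tree of order at least $\min\{|G|,2\delta^c(G)+1\}$'' is equivalent to the same statement for properly colored trees. By Theorem~\ref{thm:PC-ex}, this holds iff $G\in\cG^1\cup\cG^2\cup\cG^3\cup\cG^5$.

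It then remains to show that, among star-colored graphs, $\cG^1\cup\cG^2\cup\cG^3\cup\cG^5$ and $\cG^1\cup\cG^2\cup\cG^4\cup\cG^6$ coincide. The families $\cG^1,\cG^2$ appear on both sides, so it suffices to prove that a star-colored graph lies in $\cG^3$ iff it lies in $\cG^4$, and in $\cG^5$ iff it lies in $\cG^6$.

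For $\cG^4\subseteq\cG^3$, note that a rainbow clique is properly colored, and its colors are disjoint from $\{c,c_1,\dots,c_m\}$. So the absence conditions required in $G^3_{m,k}$ hold trivially, and $G^4_{m,k}$ is an instance of $G^3_{m,k}$. The same argument gives $\cG^6\subseteq\cG^5$. These graphs are also genuinely star-colored:
\begin{itemize}
\item each color $c_i$ is the star centered at $v_i$;
\item $c$ is the star centered at $w$;
\item every clique color is a single edge.
\end{itemize}

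For the converse, take a star-colored $G\cong G^3_{m,k}$.
\begin{itemize}
\item The clique $C$ is properly colored, so by the observation above it is rainbow.
\item Color $c_i$ appears on the distinct edges $v_iw$ and $v_iu_1$ (here $m+k\ge1$ is used). The star $E_{c_i}$ must therefore be centered at $v_i$, so every $c_i$-colored edge is incident to $v_i$.
\item By definition, $c_i$ does not appear on edges of $C$ at $v_i$. Hence $c_i$ does not appear in $C$ at all.
\item Likewise, $c$ appears on $v_0w$ and $wu_1$, so $E_c$ is centered at $w\notin C$, and $c$ does not appear in $C$.
\end{itemize}
Thus $C$ is a rainbow clique whose colors avoid $\{c,c_1,\dots,c_m\}$, so $G\in\cG^4$. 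The $\cG^5\to\cG^6$ case is identical, using the edges $v_iu_1,v_iu_2$ when $m+k\ge2$. In the small case ($m=k=1$) there are two such edges anyway, since $|I|=m+k=2$.

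I expect the only delicate step to be this bookkeeping about star centers: making sure each $c_i$ (and $c$) genuinely occurs on at least two edges, so that its center is forced. Small parameters, such as $m=1$ where $C$ is a single edge, need a separate check. Once the properly-colored-equals-rainbow observation is in place, everything else reduces directly to Theorem~\ref{thm:PC-ex}.
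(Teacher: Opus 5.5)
Your argument is correct. Note, though, that the paper contains no proof of this statement to compare against: like Theorem~\ref{thm:PC-ex}, it is imported from Hu et al. What you have written is therefore a derivation of one cited theorem from the other, and as such it is sound.

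Both ingredients hold:
\begin{itemize}
\item In a star-colored graph, a subgraph is properly colored if and only if it is rainbow. This is the same observation the paper uses in Claim~\ref{cl:pc-rainbow-equivalence}.
\item In a star-colored member of $\cG^3$ or $\cG^5$, each class $c_i$ (and $c$, in $\cG^3$) has at least two edges. This forces its center to be $v_i$ (respectively $w$), so the properly colored clique is rainbow on colors outside $\{c,c_1,\dots,c_m\}$.
\end{itemize}
Together with $\cG^4\subseteq\cG^3$ and $\cG^6\subseteq\cG^5$, this identifies the two extremal lists on star-colored graphs. Two of your remarks are unnecessary. The small-case discussion is not needed, since $m,k\ge 1$ already gives $|I|=m+k\ge 2$. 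Checking that $\cG^4$ and $\cG^6$ are star-colored is also not needed, because star-coloredness is a hypothesis. The direction from membership to non-existence uses only these inclusions and the fact that rainbow trees are properly colored.

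What your route buys is that Theorem~\ref{thm:Rb-ex} becomes a formal corollary of Theorem~\ref{thm:PC-ex}. That suffices for every use the paper makes of it. What it does not give is an independent proof. The natural order runs the other way, and Algorithm~\ref{algo:properly-tree} mirrors it: delete color-degree-redundant edges, recolor the monochromatic stars, and treat the star-colored rainbow statement as the core case. The general case, where the clique is only properly colored as in $\cG^3$ and $\cG^5$, is then recovered by adding edges back. If Theorem~\ref{thm:PC-ex} is established that way, your argument would be circular as a from-scratch proof, though it remains a complete consistency check between the two imported results.
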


\begin{thm}[Suzuki~\cite{suzuki2006necessary}]\label{thm:Rb}
    An edge-colored connected graph $G$ of order $n$ has a rainbow spanning tree if and only if, for every $r$ with $1 \le r \le n-2$, removing all edges of any $r$ colors from $G$ leaves a graph with at most $r+1$ components.
\end{thm}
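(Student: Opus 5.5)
The statement is a matroid-intersection statement, so I would prove it with Edmonds' min--max formula. Let $M_1$ be the graphic matroid of $G$ on $E$, with rank function $r_1$; recall $r_1(F)=n-\kappa(F)$, where $\kappa(F)$ is the number of components of the spanning subgraph $(V,F)$. Let $M_2$ be the partition matroid whose blocks are the colour classes $E_i$, each of capacity one, so that $r_2(F)=|c(F)|$. A rainbow spanning tree is exactly a common independent set of size $n-1$. By Edmonds' matroid intersection theorem, such a set exists if and only if
\[
\min_{A\subseteq E}\bigl(r_1(A)+r_2(E\setminus A)\bigr)\ \ge\ n-1 .
\]

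\emph{Necessity.} This direction is direct and does not need the matroid formula. Let $T$ be a rainbow spanning tree, let $R$ be a set of $r$ colours, and write $E_R=\bigcup_{i\in R}E_i$. Suppose $G-E_R$ has $k$ components. Then $T-E_R$ is a forest with at least $k$ components. Reconnecting them requires at least $k-1$ edges of $T$ in $E_R$. Since $T$ is rainbow, it contains at most $r$ such edges, so $k\le r+1$.

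\emph{Sufficiency.} The key step is to reduce the minimum over arbitrary $A$ to a minimum over colour sets. Fix $A\subseteq E$ and put $R=c(E\setminus A)$, with $r=|R|$.
\begin{itemize}
\item By definition, $r_2(E\setminus A)=r$.
\item Every edge outside $A$ has a colour in $R$, so $E\setminus E_R\subseteq A$.
\item By monotonicity of $r_1$, $r_1(A)\ge r_1(E\setminus E_R)=n-k$, where $k$ is the number of components of $G-E_R$.
\end{itemize}
Hence $r_1(A)+r_2(E\setminus A)\ge n-k+r$, and it suffices to show $k\le r+1$ for every colour set $R$. I split into three cases.
\begin{itemize}
\item If $r=0$, then $k=1$ because $G$ is connected.
\item If $1\le r\le n-2$, then $k\le r+1$ is exactly the hypothesis.
\item If $r\ge n-1$, then the bound $n-k+r\ge n-1$ holds trivially, since $k\le n$.
\end{itemize}
In every case the minimum is at least $n-1$, and Edmonds' theorem yields a common independent set of size $n-1$, which is a rainbow spanning tree.

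The only step needing care is this reduction from arbitrary $A$ to complements of unions of colour classes, and the monotonicity argument above handles it. Everything else is bookkeeping of the boundary cases $r=0$ and $r\ge n-1$.
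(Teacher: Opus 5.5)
Your proof is correct and complete. The paper itself gives no proof of this statement: it imports Suzuki's theorem as a black box, so there is no internal argument to compare against.

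Your route derives the theorem from Edmonds' matroid intersection min--max formula. You apply it to the graphic matroid and the colour partition matroid with capacity one per class, which is exactly the pair $\cM_1,\cM_2$ the paper uses in Algorithm~\ref{algo:rainbow-tree}. The key reduction is sound: you take $R=c(E\setminus A)$, note that $E\setminus E_R\subseteq A$, and use monotonicity of $r_1$. This shows the minimum is attained, up to inequality, on complements of unions of colour classes. You also handle the boundary cases $r=0$ (connectivity) and $r\ge n-1$ (via $k\le n$) correctly.

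It is worth noticing what your argument buys in the context of the paper. Theorem~\ref{thm:Rb} is used only in Claim~\ref{cl:small-n-span}, and only in the ``only if'' direction: a rainbow spanning tree forces at most $r+1$ components after deleting $r$ colours. That is precisely your short counting argument for necessity, which needs no matroid theory. The sufficiency direction, where Edmonds' theorem is genuinely required, plays no role in the paper. So the citation could be replaced by your elementary necessity argument. Your full proof additionally shows that Suzuki's theorem is simply the specialization of the matroid intersection min--max formula to these two matroids, which fits naturally with the exchange-graph reasoning of Claim~\ref{cl:rbmt}.
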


\paragraph{Matroids.}

For basic definitions on matroids and on matroid optimization, we refer the reader to~\cite{oxley2011matroid,frank2011connections}.
A \emph{matroid} $\cM=(E,\cI)$ is defined by its \emph{ground set} $E$ and its \emph{family of independent sets} $\cI\subseteq 2^E$ that satisfies the \emph{independence axioms}: (I1) $\emptyset\in\cI$, (I2) $X\subseteq Y,\ Y\in\cI\Rightarrow X\in\cI$, and (I3) $X,Y\in\cI,\ |X|<|Y|\Rightarrow\exists e\in Y\setminus X\ s.t.\ X+e\in\cI$.
Members of $\cI$ are called \emph{independent}, while sets not in $\cI$ are called \emph{dependent}.

Let $\cM$ be a matroid on a ground set $E$ with rank function $r$.
For a subset $X \subseteq E$, the \emph{restriction} of $\cM$ to $X$, denoted by $\cM|X$, is the matroid on $X$ whose independent sets are exactly the independent sets of $\cM$ that are contained in $X$.
For an element $e \in E$, the \emph{contraction} of $\cM$ by $e$, denoted by $\cM/e$, is the matroid on the ground set $E -e$ whose rank function $r_{\cM/e}$ is defined by
$$
r_{\cM/e}(Y) = r(Y+e) - r(e)
$$
for any subset $Y \subseteq E -e$.
Equivalently, for $e$ that is not a loop, a subset $I \subseteq E-e$ is independent in $\cM/e$ if and only if $I+e$ is independent in $\cM$.
The intersection of two matroids $\cM_1$ and $\cM_2$ is denoted by $\cM_1\cap \cM_2$; note that this is not a matroid in general.

Let $\{E_1, \ldots, E_t\}$ be a partition of a finite set $E$, and let $g_1, \ldots, g_t$ be non-negative integers. A \emph{partition matroid} on $E$ is a matroid whose independent sets are the subsets $I \subseteq E$ satisfying $|I \cap E_i| \leq g_i$ for all $i \in [t]$. For a graph $G=(V,E)$, the \emph{graphic matroid} $\cM=(E,\cI)$ of $G$ is a matroid whose independent sets are the forests, that is, $\cI=\{F\subseteq E\colon F\ \text{is a forest}\}$. 

\section{Hardness results}
\label{sec:hardness}

This section is devoted to the hardness of the maximum-size rainbow tree problem. We show that the problem remains NP-hard even for star-colored graphs, via a polynomial-time reduction from MAX-SAT. Note that, in star-colored graphs, a properly colored tree is also a rainbow tree.

\begin{thm}\label{thm:NP}
    It is NP-hard to find a maximum-size rainbow tree in star-colored graphs.
\end{thm}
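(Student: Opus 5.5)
Our plan is a polynomial-time reduction from \textsc{Max-SAT}: from a CNF formula we build a connected star-colored graph in which the maximum order of a rainbow tree determines the maximum number of simultaneously satisfiable clauses.

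Let the formula have variables $x_1,\dots,x_n$ and clauses $C_1,\dots,C_m$. The construction is as follows.
\begin{itemize}
\item Take a root $r$.
\item For each variable $x_i$, add two literal vertices $p_i$ (for $x_i$) and $q_i$ (for $\bar x_i$). Both edges $rp_i$ and $rq_i$ get the same new color $\alpha_i$. These two edges form a star centred at $r$, so the color class is a star.
\item To make a whole path through the literal vertices cheap, we may chain the variables instead of attaching them directly to $r$. Add a hub $h_i$ for each variable, with $h_{i-1}$ (or $r$) joined to $h_{i+1}$ through $p_i$ or $q_i$. The edges $h_{i-1}p_i$ and $h_{i-1}q_i$ share color $\alpha_i$ and form a star at $h_{i-1}$. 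The edges $p_ih_i$ and $q_ih_i$ each get a private color.
\item For each clause $C_j$, add a clause vertex $w_j$ and join it to every literal vertex occurring in $C_j$. All edges at $w_j$ share one color $\beta_j$, forming a star centred at $w_j$, so $w_j$ can be attached by at most one edge.
\item Attach to each literal vertex a large pendant gadget, for instance $M$ pendant leaves with private colors, where $M>m$.
\end{itemize}

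Because of color $\alpha_i$, a rainbow tree uses at most one of $p_i,q_i$ through its $\alpha_i$-edge. This models a truth assignment.

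The intended correspondence is the following. From an assignment satisfying $k$ clauses, we take the tree consisting of the chosen literal vertices with their pendants, the hubs, and one edge to each satisfied clause vertex. Its order is $N_0+k$ for an explicit $N_0$.

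The converse is where the work lies. Given any rainbow tree, we must show it can be normalized to this form without decreasing its order. Since every color class is a star, a rainbow tree is properly colored, and the rainbow condition reduces to using at most one edge per star. The pendant weight $M$ forces an optimal tree to reach exactly one literal vertex per variable. The chain structure forces the tree to contain the hubs. A clause vertex $w_j$ can be added exactly when some chosen literal lies in $C_j$.

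The main obstacle is the normalization step. The other literal vertex $q_i$ could also enter the tree through a clause vertex, via the edge $w_jq_i$ of color $\beta_j$. Such a tree would use both literals of $x_i$ and break the assignment correspondence.

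To block this, we would weaken that path. Each clause vertex $w_j$ gets only one $\beta_j$-edge in the tree, so it is always a leaf of the tree or a connector between two parts. We would make the gadget so that reaching $q_i$ via $w_j$ costs the $w_j$-leaf that would otherwise be counted.

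Concretely, we would replace each $w_j$ by a vertex whose edges to literal vertices all share color $\beta_j$. Then $w_j$ has degree at most $1$ in any rainbow tree, so it is a leaf and can never act as a connector. This turns the direction "tree of order $N_0+k$ gives an assignment satisfying $k$ clauses" into a counting argument: at most one $\alpha_i$-edge per variable, and clause leaves hang only from chosen literal vertices.

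The remaining checks are that the graph is connected and star-colored, and that the reduction is polynomial. These are routine.
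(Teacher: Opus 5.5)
There is a genuine gap: you never commit to one reduction, and the variant your converse relies on (``the chain structure forces the tree to contain the hubs'') is broken. In the diamond $h_{i-1}$--$\{p_i,q_i\}$--$h_i$, only the two edges at $h_{i-1}$ share the colour $\alpha_i$. The edges $p_ih_i$ and $q_ih_i$ have private colours, so $h_{i-1}p_i, p_ih_i, h_iq_i$ is a rainbow path containing both literal vertices of $x_i$. Do this for every $i$, add all pendants, and hang each $w_j$ on any literal vertex of $C_j$. This gives a rainbow spanning tree for every formula without empty clauses, so the maximum order says nothing about \textsc{Max-SAT}.

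The pendant weight $M$ makes things worse: it rewards taking both $p_i$ and $q_i$, so it cannot force ``exactly one literal vertex per variable''. The obstacle you single out is not a real one. All edges at $w_j$ already share $\beta_j$, so $w_j$ is a leaf of every rainbow tree, and your proposed ``fix'' is literally the construction you started with.

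The missing idea is that ``at most one literal vertex per variable'' must come from the colours, not from weights. This requires that a literal vertex has no neighbour other than its hub that can have degree at least $2$ in a rainbow tree. Your first variant (both $rp_i,rq_i$ coloured $\alpha_i$, no chain) has this property. The neighbours of $p_i$ are $r$, its pendants, and clause vertices, and all but $r$ are leaves in any rainbow tree. Hence a tree path from $p_i$ to $q_i$ would be $p_irq_i$, which uses $\alpha_i$ twice.

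With that variant, any rainbow tree $T$ has at most $1+n(M+1)$ non-clause vertices. Each clause vertex of $T$ hangs from a literal vertex of $T$, which is true under the assignment making the literals in $T$ true. So the maximum order is exactly $1+n(M+1)$ plus the \textsc{Max-SAT} optimum.

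This is the paper's proof, up to cosmetic differences. The paper uses one hub $y_i$ per variable, with $y_ia_{i1},y_ia_{i2}$ both of colour $b_i$. It links the hubs by a path $y_1\cdots y_s$ with private colours, so the backbone never passes through literal vertices. It needs no pendants, and it shows the maximum order equals the \textsc{Max-SAT} optimum plus $2s$.
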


\begin{proof}
    We give a reduction from MAX-SAT by constructing a star-colored graph $G_\Phi$ from a CNF formula $\Phi$ with variables $x_1,\dots,x_s$ and clauses $c_1,\dots,c_t$ in polynomial time.

    For each variable $x_i$, we introduce three vertices $y_i, a_{i1}$, and $a_{i2}$. 
    For each clause $c_j$, we introduce a vertex $c'_j$. 
    For each $i$, we add the edges $y_i a_{i1}$ and $y_i a_{i2}$. 
    Next, for each clause $c_j$ and each variable $x_i$, we add the edge $c'_ja_{i1}$ if the positive literal $x_i$ occurs in $c_j$, and we add the edge $c'_ja_{i2}$ if the negative literal $\neg x_i$ occurs in $c_j$. 
    Finally, we add the edges $y_i y_{i+1}$ for all $i \in [s-1]$. We introduce three pairwise disjoint sets of colors: $\{b_i\colon i\in [s]\}$, $\{C_j\colon j\in [t]\}$, and $\{d_i\colon i\in [s-1]\}$. 
    For each $i$, assign the color $b_i$ to both $y_i a_{i1}$ and $y_i a_{i2}$. 
    For each $j$, assign the color $C_j$ to all edges incident to $c'_j$. 
    Finally, for each $i \in [s-1]$, assign the color $d_i$ to the edge $y_i y_{i+1}$. 
    The resulting edge-colored graph is star-colored. Indeed, for each $i$, the edges of color $b_i$ are $y_i a_{i1}$ and $y_i a_{i2}$, which form a star centered at $y_i$. For each $j$, the edges of color $C_j$ are those incident to $c'_j$, forming a star centered at $c'_j$. Finally, each color $d_i$ is assigned to a single edge, forming a trivial star. For an illustration, see Figure~\ref{fig:reduction}.

    Let $\alpha$ be a truth assignment for $\Phi$, and let $q$ denote the number of clauses satisfied by $\alpha$. 
    We construct a tree $T_\alpha$ as follows. 
    Start with the path $P = y_1y_2\cdots y_s$. 
    For each $i \in [s]$, add the edge $y_i a_{i1}$ if $\alpha(x_i)=1$, and add the edge $y_i a_{i2}$ if $\alpha(x_i)=0$. 
    For each satisfied clause $c_j$, pick exactly one true literal, and attach $c'_j a_{i1}$ if this literal is $x_i$, and $c'_j a_{i2}$ if it is $\neg x_i$.
    By construction, $T_\alpha$ is formed by iteratively attaching degree-one vertices to the initial path $P$, thus it is a tree. 
    All edge colors in $T_\alpha$ are distinct, so $T_\alpha$ is a rainbow tree. 
    Finally, $V(T_\alpha)$ consists of all $s$ vertices $y_i$, exactly one vertex from $\{a_{i1}, a_{i2}\}$ for each $i \in [s]$, and the $q$ clause vertices corresponding to the satisfied clauses. 
    Therefore, the order of the tree is $|V(T_\alpha)| = 2s+q$.

    Conversely, let $T$ be a rainbow tree in $G_\Phi$. 
    For each $i \in [s]$, the edges $y_i a_{i1}$ and $y_i a_{i2}$ share color $b_i$, so $T$ contains at most one of them, including at most one vertex from $\{a_{i1}, a_{i2}\}$.
    Similarly, for each $j$, $T$ contains at most one edge incident to $c'_j$, making $c'_j$ a leaf if it belongs to $T$. 
    Define a truth assignment $\alpha_T$ by setting $\alpha_T(x_i)=1$ if $a_{i1} \in V(T)$, $\alpha_T(x_i)=0$ if $a_{i2} \in V(T)$, and an arbitrary value otherwise.
    If $c'_j \in V(T)$, its unique neighbor is either $a_{i1}$ or $a_{i2}$, implying the corresponding literal satisfies $c_j$ under $\alpha_T$. 
    It follows that the number of clauses satisfied by $\alpha_T$ is at least the number of clause vertices in $T$.
    Observe that $|V(T) \cap \{y_1, \dots, y_s\}| \le s$ and, for the variable gadgets, $|V(T) \cap \{a_{i1}, a_{i2} \colon i \in [s]\}| \le s$. Therefore, at least $|V(T)| - 2s$ vertices of $T$ must be clause vertices.
    By the argument above, the corresponding truth assignment $\alpha_T$ satisfies at least $|V(T)| - 2s$ clauses in $\Phi$.
    
    Let $\opt$ denote the maximum number of satisfiable clauses in $\Phi$, and let $\opt'$ denote the maximum order of a rainbow tree in $G_\Phi$.
    The construction above shows that $\opt' \ge \opt + 2s$.
    Conversely, let $T$ be a rainbow tree of maximum order in $G_\Phi$. Then the corresponding assignment satisfies at least $|V(T)| - 2s = \opt' - 2s$ clauses, and hence $\opt \ge \opt' - 2s$.
    Therefore, $\opt' = \opt + 2s$.
    Since $G_\Phi$ can be constructed in polynomial time, this establishes that the maximum-size rainbow tree problem is NP-hard even on star-colored graphs.
\end{proof}

\begin{figure}[h]
    \centering
    \begin{tikzpicture}
        \tikzset{
            every circle node/.style={minimum size=0.3cm, inner sep=0cm},
            font=\small,
            tree edge/.style={line width=4pt, draw=#1, text=#1},
            normal edge/.style={line width=1.5pt, draw=#1!40, text=#1!80}
        }
        
        \def\dx{4.5}
        \def\dy{2.2}
        \def\cy{4.5}

        \node[circle, fill, label={270:$y_1$}] (y1) at (0, 0) {};
        \node[circle, fill, label={270:$y_2$}] (y2) at (\dx, 0) {};
        \node[circle, fill, label={270:$y_3$}] (y3) at (2*\dx, 0) {};
        
        \foreach \i/\pos in {1/y1, 2/y2, 3/y3} {
            \node[circle, fill, label={180:$a_{\i1}$}] (a\i1) at ($(\pos) + (-0.8, \dy)$) {};
            \node[circle, fill, label={0:$a_{\i2}$}]   (a\i2) at ($(\pos) + (0.8, \dy)$) {};
        }

        \node[circle, fill=blue,   minimum size=0.3cm, inner sep=0cm] at (a11) {};
        \node[circle, fill=teal,   minimum size=0.3cm, inner sep=0cm] at (a22) {};
        \node[circle, fill=purple, minimum size=0.3cm, inner sep=0cm] at (a31) {};

        \draw[tree edge=violet]          (y1) -- (y2) node[midway, below] {$d_1$};
        \draw[tree edge=cyan!80!black]   (y2) -- (y3) node[midway, below] {$d_2$};

        \draw[tree edge=blue]   (y1) -- (a11) node[midway, fill=white, inner sep=1pt] {$b_1$};
        \draw[normal edge=blue] (y1) -- (a12) node[midway, fill=white, inner sep=1pt] {$b_1$};

        \draw[normal edge=teal] (y2) -- (a21) node[midway, fill=white, inner sep=1pt] {$b_2$};
        \draw[tree edge=teal]   (y2) -- (a22) node[midway, fill=white, inner sep=1pt] {$b_2$};

        \draw[tree edge=purple]   (y3) -- (a31) node[midway, fill=white, inner sep=1pt] {$b_3$};
        \draw[normal edge=purple] (y3) -- (a32) node[midway, fill=white, inner sep=1pt] {$b_3$};

        \node[circle, fill, label={90:$c'_1$}] (c1) at (0, \cy) {};
        \node[circle, fill, label={90:$c'_2$}] (c2) at (0.666*\dx, \cy) {};
        \node[circle, fill, label={90:$c'_3$}] (c3) at (1.333*\dx, \cy) {};
        \node[circle, fill, label={90:$c'_4$}] (c4) at (2*\dx, \cy) {};

        \draw[tree edge=red]   (c1) -- (a11) node[pos=0.28, fill=white, inner sep=1pt] {$C_1$};
        \draw[normal edge=red] (c1) -- (a21) node[pos=0.3, fill=white, inner sep=1pt] {$C_1$};

        \draw[normal edge=orange] (c2) -- (a11) node[pos=0.28, fill=white, inner sep=1pt] {$C_2$};
        \draw[tree edge=orange]   (c2) -- (a22) node[pos=0.3, fill=white, inner sep=1pt] {$C_2$};

        \draw[normal edge=yellow!80!black] (c3) -- (a12) node[pos=0.30, fill=white, inner sep=1pt] {$C_3$};
        \draw[tree edge=yellow!80!black]   (c3) -- (a31) node[pos=0.32, fill=white, inner sep=1pt] {$C_3$};

        \draw[normal edge=green!70!black] (c4) -- (a12) node[pos=0.2, right, fill=white, inner sep=1pt] {$C_4$};
        \draw[normal edge=green!70!black] (c4) -- (a32) node[pos=0.32, fill=white, inner sep=1pt] {$C_4$};
    \end{tikzpicture}
    \caption{Illustration of the reduction from MAX-SAT to the Maximum-Size Rainbow Tree problem in star-colored graphs. The figure shows the edge-colored graph $G_\Phi$ associated with the formula $\Phi=(x \lor y)\land(x \lor \neg y)\land(\neg x \lor z)\land(\neg x \lor \neg z)$.
    The thick colored edges form a maximum-size rainbow tree corresponding to the optimal assignment $(x,y,z)=(1,0,1)$.}
    \label{fig:reduction}
\end{figure}

\section{Algorithms}
\label{sec:algo}

In this section, we provide a polynomial-time algorithm that finds a properly colored tree of order at least $\min\{|V(G)|, 2\delta^c(G)+1\}$ in a connected graph $G$, provided such a tree exists.
We briefly describe the main ideas of the algorithm.

First, we preprocess the input graph by deleting every edge whose removal does not decrease the color degree of either endpoint.
If such a deletion creates a cut edge, and the color degrees in both resulting components remain unchanged, then the cut edge can be used to explicitly construct a properly colored tree of sufficiently large size by combining it with edges incident to its endpoints. 
From this certificate, we construct a properly colored tree on at least $2\delta^c(G)+1$ vertices.
Otherwise, we obtain a reduced graph $G^\star$.

Next, we reduce the problem to finding a rainbow tree in a connected star-colored graph. 
We recolor each monochromatic component of $G^\star$ with a new color.
This recoloring produces a connected star-colored graph $G'$ and preserves the minimum color degree.
Moreover, every rainbow tree in $G'$ corresponds to a properly colored tree in $G$.

To construct a large rainbow tree in a connected star-colored graph, the algorithm maintains a rainbow tree $T$ on a vertex set $N$ through greedy extensions.
If no greedy extension using a new color is possible, the algorithm performs one- or two-edge exchanges within $G[N]$.
These exchanges are found via matroid intersection on the restricted graphic and color partition matroids; this procedure guarantees extendability and updates $T$ while preserving the rainbow property.

First, in Section~\ref{sec:star}, we present an algorithm for finding a large rainbow tree in a connected star-colored graph.
Building on this result, Section~\ref{sec:general} establishes our main theorem for general connected graphs.

\subsection{Star-colored graphs}
\label{sec:star}

Although the problem of finding a maximum-size rainbow tree is NP-hard for star-colored graphs, we can find a rainbow tree of order at least $\min\{|V(G)|, 2\delta^c(G)+1\}$ in polynomial time.
Our algorithm is presented as Algorithm~\ref{algo:rainbow-tree}.

\begin{algorithm}[h]
\caption{\textsf{BuildRainbowTree}$(G)$}\label{algo:rainbow-tree}
\DontPrintSemicolon

\KwIn{A connected star-colored graph $G=(V,E)$ with edge coloring $c\colon E\to[k]$.}
\KwOut{A rainbow tree $T$ in $G$.}

\medskip

$\delta^c \gets \delta^c(G)$.\;
Pick an arbitrary vertex $r \in V$.\;
Initialize $T \gets (\{r\}, \emptyset)$, $N \gets \{r\}$, and $U \gets \emptyset$.\;

Let $\cM_1$ be the graphic matroid of $G$, and let $\cM_2$ be the partition matroid on $E(G)$ with rank $1$ for each color class.\;

\While{$|N| < \min\{|V|, 2\delta^c+1\}$\label{step_rt:while}}{
    \eIf{there exists an edge $e = ux \in \partial(N)$ with $u\in N$ and $c(e) \notin U$\label{step:single-edge-s}}{
        Update $T \gets (N + x, E(T) + e)$.\;
        $N \gets V(T)$, $U \gets c(E(T))$.\label{step:single-edge}\;
    }{
        \For{each edge $e=ux \in \partial(N)$ with $u\in N$\label{step:exchange-one-edge-s}{\footnotesize\color{Blue}{\tcp*{One-edge exchange}}}}{
            Find a maximum common independent set $F$ of $\cM_1|E(G[N])$ and $(\cM_2/e)|E(G[N])$.\;
            \If{$|F| = |N| - 1$}{
                Update $T \gets (N + x, F + e)$.\;
                $N \gets V(T)$, $U \gets c(E(T))$.\;
                \textbf{goto} Step \ref{step_rt:while}.\label{step:exchange-one-edge}\;
            }
        }
        \For{each pair of edges $e_0=ux \in \partial(N)$ and $e = xy$ such that $u\in N$, $x,y \notin N$, and $c(e_0)\neq c(e)${\footnotesize\color{Blue}{\tcp*{Two-edge exchange}}}}{
            \For{each vertex $v \in N \setminus \{u\}$}{
                $N' \gets N - v$.\;
                Find a maximum common independent set $F$ of $\cM_1|E(G[N'])$ and $(\cM_2/\{e_0,e\})|E(G[N'])$.\;
                \If{$|F| = |N'| - 1$}{
                    Update $T \gets (N' \cup \{x,y\}, F \cup \{e_0, e\})$.\;
                    $N \gets V(T)$, $U \gets c(E(T))$.\;
                    \textbf{goto} Step \ref{step_rt:while}\label{step:exchange-two-edge}.\;
                }
            }
        }
        \textbf{break}.\;
    }
}
\Return{$T$}\;
\end{algorithm}

\begin{thm}\label{thm:rb-star}
    Given a connected star-colored graph $G$, Algorithm \ref{algo:rainbow-tree} finds a rainbow tree of order at least $\min\{|V(G)|, 2\delta^c(G)+1\}$ in polynomial time, if one exists.
\end{thm}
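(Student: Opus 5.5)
The plan has two parts: a running-time bound, and a correctness argument by contraposition against Theorem~\ref{thm:Rb-ex}.

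\textbf{Invariants and running time.} Throughout, $T$ is a rainbow tree with $V(T)=N$ and $U=c(E(T))$, so $|U|=|N|-1$.
\begin{itemize}
\item A greedy step adds a boundary edge of a new colour, so $T$ stays a rainbow tree.
\item In a one-edge exchange, $F$ is a common independent set of size $|N|-1$ in the graphic matroid of $G[N]$ and in the colour matroid contracted by $e$. Hence $F$ is a rainbow spanning tree of $G[N]$ avoiding $c(e)$, and $F+e$ is a rainbow tree on $N+x$.
\item The two-edge exchange is the same argument with $\{e_0,e\}$, where $c(e_0)\neq c(e)$. The new tree lives on $N'\cup\{x,y\}=(N-v)\cup\{x,y\}$, so $|N|$ still increases by one.
\end{itemize}
So every non-terminating iteration increases $|N|$ by exactly one, and there are at most $n$ iterations. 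Each iteration runs $O(m)$ one-edge tests and $O(m^2 n)$ two-edge tests, and each test is one matroid intersection computation, which is polynomial by Edmonds' algorithm. This gives polynomial time.

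\textbf{Correctness.} Suppose the algorithm breaks with $|N|<\min\{|V|,2\delta^c+1\}$. I would show that $G\in\mathcal G^1\cup\mathcal G^2\cup\mathcal G^4\cup\mathcal G^6$; then Theorem~\ref{thm:Rb-ex} says no rainbow tree of the required order exists, which is exactly the ``if one exists'' clause. First I record what the break means.
\begin{itemize}
\item Every edge of $\partial(N)$ has a colour in $U$.
\item For every $e\in\partial(N)$, $G[N]$ has no rainbow spanning tree avoiding $c(e)$. Applying Theorem~\ref{thm:Rb} to $G[N]$ with colour $c(e)$ deleted, there is a set $R_e$ of colours, with $c(e)\in R_e$ say, whose removal splits $G[N]$ into more than $|R_e|$ components.
\item The analogous Suzuki-type obstructions hold for every candidate pair $e_0=ux$, $e=xy$ and every vertex $v\in N\setminus\{u\}$.
\end{itemize}

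\textbf{Counting step.} Fix $x\notin N$ adjacent to $N$, which exists since $G$ is connected and $N\neq V$. Since $x$ sees at least $\delta^c$ colours and colour classes are stars, I split its colours into two groups:
\begin{itemize}
\item colours on edges into $N$, all lying in $U$;
\item colours on edges $xy$ with $y\notin N$.
\end{itemize}
If some $xy$ has a colour different from that of a boundary edge $ux$, the two-edge exchange was available. Its failure for every $v$ means that deleting any vertex other than $u$, together with the two colours, destroys rainbow connectivity of $G[N-v]$.

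Combining this with the failure of the one-edge exchange and with $|U|=|N|-1$, I would first derive $|N|\ge 2\delta^c$. Since $|N|<2\delta^c+1$, this forces $|N|=2\delta^c$. In that tight case every inequality in the count is an equality, and the equalities should pin down the structure:
\begin{itemize}
\item a set $K\subseteq N$ (or a slight modification of $N$) of centres of the stars of the boundary colours;
\item an independent set $I$ of outside vertices, each joined to $K$ in the centre's private colour;
\item inside $K$, either a rainbow clique with fresh colours (giving $\mathcal G^4,\mathcal G^6$, with or without the special vertices $w,v_0$), or a clique coloured by a regular tournament (giving $\mathcal G^1$), or the small case $\delta^c=2$ (giving $\mathcal G^2$).
\end{itemize}

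\textbf{Main obstacle.} The hardest step is turning the family of Suzuki obstructions from the failed exchanges into this explicit extremal structure. I would handle it by taking, for each boundary colour, a minimal violating colour set and proving that these sets are nested or coincide. Tightness of $|N|-1$ against the component count should then force each colour in $U$ to be a star centred in a common clique.
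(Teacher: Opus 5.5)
You have the same outer frame as the paper (contrapositive through Theorem~\ref{thm:Rb-ex}, with tightness forcing an extremal family), and your running-time analysis is fine. But the correctness argument stops where the actual proof begins, and the one concrete mechanism you offer for the hard step is not right.

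You collect one Suzuki obstruction $R_e$ per boundary edge and plan to show the minimal ones are ``nested or coincide''. You give no reason for this, and it is not a general property. Call a colour set $X$ \emph{tight} if $G[N]-\bigcup_{c\in X}E_c$ has exactly $|X|+1$ components. If $G[N]$ is a path $abc$ with $c(ab)=1$, $c(bc)=2$, the minimal tight sets containing $1$ and $2$ are the disjoint sets $\{1\}$ and $\{2\}$. What the rest of the argument needs is a \emph{single} tight set $R$ such that every edge of $\partial(N)$ has its colour in $R$. This can be extracted from your per-edge obstructions, but via a different fact: by submodularity of the graphic rank, tight sets are closed under union, so $\bigcup_e R_e$ works. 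The paper gets such a set directly as the colours reachable from $\partial(N)$ in the matroid-intersection exchange digraph (Claim~\ref{cl:rbmt}). This also shows that only the failure of the greedy and one-edge steps is used at this point.

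The larger gap is the inequality $|N|\ge 2\delta^c$. You assert it (``I would first derive'') without a mechanism, and the mechanism you hint at cannot give it. Consider counting colours at an outside vertex $x$. Its edges into $N$ carry pairwise distinct colours of $U$, each centred in $N$, and its edges to outside vertices carry colours not in $U$. So when $x$ has no outside neighbours this yields at best $\delta^c\le |N|-1$.

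The factor $2$ comes from counting at \emph{inside} vertices. The paper proceeds as follows:
\begin{itemize}
\item It observes that the centres $S$ of the $R$-coloured stars lie in $N$.
\item It takes the components $W_1,\dots,W_{r+1}$ and representatives $x_i\in S\cap W_i$ for $i\le m$, and proves \eqref{eq:deg-bound-single}.
\item It bounds the summed extra term by orienting $G[\{x_1,\dots,x_m\}]$ according to which endpoint owns each colour.
\item Combining this with $|N|\ge m+2\beta+1$ forces $m=1$ if $|N|\le 2\delta^c-1$.
\item It rules out $m=1$ via a tree edge $w_1w_2$ from $W_1$ to $W_2$ with $d^c(w_1)+d^c(w_2)\le |N|$.
\end{itemize}

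Only the resulting chain of equalities yields the structure in the tight case:
\begin{itemize}
\item $m\ge 3$: a regular, strongly connected tournament on $S$, giving $\mathcal G^1$;
\item $m=2$: $\mathcal G^2$ or $\mathcal G^4$;
\item $m=1$: $\mathcal G^6$.
\end{itemize}
The two-edge exchange enters only at that final stage, to show that $V(G)\setminus N$ (resp.\ $V(G)\setminus S$) is independent. It is not used in the bound $|N|\ge 2\delta^c$, as your sketch suggests. Your list of target structures is the expected answer rather than a derivation of it, so as written the proposal does not prove the theorem.
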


\begin{proof}
    We assume that $G$ has a rainbow tree of order at least $\min\{|V(G)|, 2\delta^c(G)+1\}$.
    By Theorem~\ref{thm:Rb-ex}, we have $G \notin \mathcal{G}^1 \cup \mathcal{G}^2 \cup \mathcal{G}^4 \cup \mathcal{G}^6$.
    Let $T$ be the tree returned by Algorithm~\ref{algo:rainbow-tree}, and let $U = c(E(T))$. 
    Define the edge set $\hat{E} = E(G[V(T)]) \cup \partial(V(T))$. 
    Let $\cM_1$ be the graphic matroid of $(V(G), \hat{E})$, and let $\cM_2$ be the partition matroid on $\hat{E}$ in which each color class has rank $1$.
    It follows that $E(T) \in \cI(\cM_1) \cap \cI(\cM_2)$.

    Let $D_I$ denote the exchange digraph of $E(T)$ with respect to $\cM_1$ and $\cM_2$. 
    Specifically, $V(D_I) = \hat{E}$. For $e \in E(T)$ and $f \in \hat{E} \setminus E(T)$, the arc set of $D_I$ contains $(e, f)$ if $E(T) - e + f \in \cI(\cM_1)$, and it contains $(f, e)$ if $E(T) - e + f \in \cI(\cM_2)$.
    For $i \in \{1,2\}$, define $Z_i = \{f \in \hat{E} \setminus E(T)\colon E(T) + f \in \cI(\cM_i)\}$. 
    Observe that $Z_1 = \partial(V(T))$.

    Call an edge $f \in \hat{E}$ \emph{reachable} if $D_I$ contains a directed path from $Z_1$ to $f$.
    Since $T$ is rainbow, each color $c \in U$ appears on a unique edge $e_c \in E(T)$. 
    We say that a color $c \in U$ is \emph{reachable} if its corresponding edge $e_c$ is reachable. 
    Let $R = \{c_1, \dots, c_r\}$ be the set of all reachable colors.

    \begin{cl}\label{cl:rbmt}
        The graph $G[V(T)]-(E_{c_1} \cup \cdots \cup E_{c_r})$ has exactly $r+1$ components, and the edges of $\partial(V(T))$ are contained in $E_{c_1} \cup \cdots \cup E_{c_r}$.
    \end{cl}
    
    \begin{proof}
        Since Steps~\ref{step:single-edge-s}-\ref{step:single-edge} and \ref{step:exchange-one-edge-s}-\ref{step:exchange-one-edge} of Algorithm~\ref{algo:rainbow-tree} yield no augmentation, the standard matroid intersection algorithm implies that $D_I$ contains no directed path from $Z_1$ to $Z_2$.

        Consider an edge $f \in Z_1 = \partial(V(T))$.
        If $c(f) \notin U$, then $E(T) + f \in \cI(\cM_2)$, which implies $f \in Z_2$.
        This inclusion yields $Z_1 \cap Z_2 \neq \emptyset$, contradicting the absence of a directed path from $Z_1$ to $Z_2$.
        Thus, $c(f) \in U$. 
        Let $c = c(f)$. 
        By the definition of $\cM_2$, the set $E(T) - e_c + f \in \cI(\cM_2)$, which implies that $(f, e_c)$ is an arc in $D_I$. 
        Because $f \in Z_1$ is trivially reachable, $e_c$ is also reachable, yielding $c \in R$. 
        Therefore, $\partial(V(T)) \subseteq \bigcup_{i=1}^r E_{c_i}$.
    
        Observe that removing the edges $\{e_c\colon c \in R\}$ from $T$ produces a forest with exactly $r+1$ components.
        Because $E(T) \setminus \{e_c\colon c \in R\} \subseteq E(G[V(T)]) \setminus \bigcup_{i=1}^r E_{c_i}$, the graph $G[V(T)] \setminus \bigcup_{i=1}^r E_{c_i}$ has at most $r+1$ components.

        It remains to show that $G[V(T)] - (E_{c_1} \cup \cdots \cup E_{c_r})$ has at least $r+1$ components.
        Suppose, for a contradiction, that there exists an edge $f = xy \in E(G[V(T)])$ with endpoints $x$ and $y$ in different components of $T \setminus \bigcup_{i=1}^r E_{c_i}$ and such that $c(f) \notin R$.
        Let $P$ be the unique $x$-$y$ path in $T$; then $P$ contains at least one edge of $F = {e_{c_1}, \dots, e_{c_r}}$. Let $e$ be the first such edge along $P$ from $x$ to $y$. In the forest $T \setminus e$, the vertices $x$ and $y$ lie in different components. Since $f = xy$, it connects these components, so $E(T) - e + f \in \mathcal{I}(\cM_1)$, implying that $(e,f)$ is an arc in $D_I$. Because $e \in F$, we have $c(e) \in R$, and by definition $e$ is reachable. Therefore, $f$ is also reachable. If $c(f) \notin U$, then $E(T) + f \in \mathcal{I}(\cM_2)$, yielding $f \in Z_2$. Since $f$ is reachable from $Z_1$, this produces a directed path from $Z_1$ to $Z_2$ in $D_I$, a contradiction. If $c(f) \in U$, then $E(T) - e_{c(f)} + f \in \mathcal{I}(\mathcal{M}2)$, so $(f, e_{c(f)})$ is an arc in $D_I$. Since $f$ is reachable, $e_{c(f)}$ is reachable, implying $c(f) \in R$, again a contradiction. Hence, no such edge $f$ exists, and $G[V(T)] \setminus \bigcup_{i=1}^r E_{c_i}$ has exactly $r+1$ components, completing the proof of Claim~\ref{cl:rbmt}.
    \end{proof}

    \begin{cl}\label{cl:small-n-span}
        If Algorithm~\ref{algo:rainbow-tree} terminates with $|V(T)| < |V(G)|$, then $G$ does not contain a rainbow spanning tree.
    \end{cl}
    
    \begin{proof}
        Suppose, for contradiction, that $G$ has a rainbow spanning tree.
        Let $H \coloneqq G \setminus \bigcup_{i=1}^r E_{c_i}$, the graph obtained by removing all edges of the colors used in $T$.        
        Since every edge between $V(T)$ and $V(G)\setminus V(T)$ is in
        $E_{c_1}\cup\cdots\cup E_{c_r}$, $H$ contains no edges between $V(T)$ and $V(G) \setminus V(T)$.
        Since $V(G)\setminus V(T)$ is nonempty, this set of vertices forms at least one connected component in $H$ apart from the components inside $V(T)$.
        By Claim~\ref{cl:rbmt}, $H[V(T)]$ has exactly $r+1$ components, so $H$ has at least $r+2$ components in total.
        However, by Theorem~\ref{thm:Rb}, removing $r$ colors from a graph that contains a rainbow spanning tree cannot leave more than $r+1$ components. Since $r \le |E(T)| = |V(T)|-1 \le |V(G)|-2$, we get a contradiction. Hence, $G$ cannot have a rainbow spanning tree.
    \end{proof}
    
    If $2\delta^c(G)+1 \ge |V(G)|$, then $G$ admits a rainbow spanning tree, and
    by Claim~\ref{cl:small-n-span}, the algorithm outputs one.
    Therefore, for the rest of the proof, we assume that $|V(G)|>2\delta^c(G)+1$.
    Let $W_1,  \dots, W_{r+1}$ be the components of $G[V(T)] \setminus \bigcup_{i=1}^r E_{c_i}$.
    For every monochromatic star isomorphic to $K_2$, we choose one of its end vertices arbitrarily as its center.
    For any $v \in V(G)$ and any color $c$, if $v$ is the center of a monochromatic star of color $c$, then we say $c$ \emph{belongs} to $v$.
    For each $v \in V(G)$, let
    $$
    C(v) = \{c \in R\colon c \text{ belongs to } v\},
    $$
    and define
    $$
    S = \{v \in V(G)\colon C(v) \neq \emptyset\}.
    $$
    Thus, $S$ is precisely the set of centers of monochromatic stars whose colors lie in $R$.
    
    \begin{cl}\label{cl:center}
        The set $S$ is contained in $V(T)$.
    \end{cl}
    \begin{proof}
         Let $v \in S$. Then some color $c \in R$ belongs to $v$, so $v$ is the center of a monochromatic star of color $c$.
         Since every color in $R$ appears on an edge of $T$, and the center of such a star is an endpoint of one of its edges, we have $v \in V(T)$.
    \end{proof}

    Reindex $W_1,\dots,W_{r+1}$ so that there exists an integer $1 \le m \le r+1$ such that $V(W_i)\cap S \neq \emptyset$ for $i\in[m]$, and $V(W_i)\cap S = \emptyset$ for $m+1 \le i \le r+1$.
    For each $i\in[m]$, fix a vertex $x_i \in V(W_i)\cap S$. 
    Define
    $\beta_i\coloneqq |(S\setminus\{x_1,\dots,x_m\})\cap V(W_i)|$ and set $\beta\coloneqq \sum_{i=1}^m \beta_i$.
    
    \begin{cl}\label{cl:2delta}
    The tree $T$ has order $|V(T)| \ge 2\delta^c(G)$.
    \end{cl}

    \begin{proof}
        We first show that
        \begin{equation}\label{eq:size-lb}
        |V(T)| \ge m + 2\beta + 1.
        \end{equation}        
        Indeed, for $i\in [m]$, we have $|W_i|\ge 1+\beta_i$. 
        Also, $|W_j|\ge 1$ for $m+1\le j\le r+1$. Thus, 
        \begin{equation}\label{eq:size-lb-mid}
        |V(T)|=\sum_{j=1}^{r+1}|W_j| \ge \sum_{i=1}^m(1+\beta_i) + (r+1-m) = \beta+r+1.
        \end{equation}
        Since each vertex in $S$ supports at least one color in $R$, we have $r\ge |S|=m+\beta$, and thus $|V(T)|\ge m+2\beta+1$, proving \eqref{eq:size-lb}.
        
        By Claim~\ref{cl:rbmt}, all colors on edges between $x_i$ and $V(G) \setminus V(T)$ are in $C(x_i)$.
        In $W_i$, the number of colors on edges incident to $x_i$ that are not in $C(x_i)$ is at most $|W_i|-1$.
        Let $v \in V(T) \setminus V(W_i)$.
        If $c(x_i v) \notin C(x_i)$, then $v \in S \setminus V(W_i)$.
        The set $S \setminus V(W_i)$ contains exactly the vertices $x_j$ for $j \neq i$, and the other $\sum_{j \neq i} \beta_j$ vertices of $S \setminus \{x_1, \dots, x_m\}$.
        Therefore, 
        \begin{equation}\label{eq:deg-bound-single}
        d_G^c(x_i) \le |C(x_i)| + (|W_i|-1)
        + \bigl|\{c(x_ix_j)\colon j\neq i,\ c(x_ix_j)\notin C(x_i)\}\bigr|
        + \sum_{j\neq i}\beta_j.
        \end{equation}
        Summing \eqref{eq:deg-bound-single} over $1\le i\le m$ gives
        \begin{equation}\label{eq:deg-bound-sum}
        \sum_{i=1}^m d_G^c(x_i) \le \sum_{i=1}^m(|C(x_i)|+|W_i|) + (m-1)\beta
        + \sum_{i=1}^m \bigl|\{c(x_ix_j)\colon j\neq i,\ c(x_ix_j)\notin C(x_i)\}\bigr| - m.
        \end{equation}

        To bound the right hand side, we orient each edge of $G[\{x_1,\dots,x_m\}]$ as follows.
        For each edge $x_ix_j \in E(G)$ with $i \neq j$, orient it from $x_i$ to $x_j$ if $c(x_ix_j) \in C(x_i)$; otherwise, orient it from $x_j$ to $x_i$.
        Then, for each $i$, the set $\{c(x_ix_j)\colon j\neq i,\ c(x_ix_j)\notin C(x_i)\}$ is counted by the indegree of $x_i$ in this orientation.
        Hence,
        $$
        \sum_{i=1}^m \bigl|\{c(x_ix_j)\colon j \neq i,\ c(x_ix_j) \notin C(x_i)\}\bigr| \leq |E(G[\{x_1,\dots,x_m\}])| \leq \frac{m(m-1)}{2},
        $$
        and equality holds only if $G[\{x_1,\dots,x_m\}]$ is a complete graph.
        Also, since every vertex of $S \setminus\{x_1,\dots,x_m\}$ supports at least one color in $R$, we have
        $\sum_{i=1}^m |C(x_i)| \le r-\beta$.
        Finally, for $m+1\le j\le r+1$, we have $|W_j|\ge 1$.
        Thus, $\sum_{i=1}^m |W_i| \le |V(T)|-(r+1-m)=|V(T)|-r+m-1$.
        Plugging these bounds into \eqref{eq:deg-bound-sum} yields
        \begin{equation}\label{eq:deg-bound-final}
        m\delta^c(G) \le \sum_{i=1}^m d_G^c(x_i)
        \le |V(T)| + (m-2)\beta + \frac{m(m-1)}{2}-1.
        \end{equation}

        Suppose indirectly that $|V(T)| \leq 2 \delta^{c}(G)-1$.
        Then, by \eqref{eq:deg-bound-final}, we obtain
        $$
        (m-2)2 \delta^{c}(G) \leq(m-2)(m+2 \beta+1)-2 .
        $$
        On the other hand, by \eqref{eq:size-lb}, we have 
        $$
        2 \delta^{c}(G) \geq m+2 \beta+2.
        $$
        Since $m\geq 1$, these together imply $m=1$. Choose $w_1 \in V\left(W_1\right)$ and $w_2 \in V(W_2)$ such that $w_1 \in S$ and $w_1 w_2 \in E(T)$.
        Note that $w_2 \notin S$.
        Since $m=1$, we have 
        $$
        \begin{aligned}
        d_G^{c}\left(w_1\right)+d_G^{c}\left(w_2\right) & \leq\left(\left|W_1\right|-1+\left|C\left(w_1\right)\right|\right)+\left(\left|W_2\right|-1+r-\left|C\left(w_1\right)\right|+1\right) \\
        & \leq\left|W_1\right|+\left|W_2\right|+r-1 \\
        & \leq|T| \\
        & \leq 2 \delta^{c}(G)-1.
        \end{aligned}
        $$
        By the definition of minimum color degree, we have $d_G^c(w_1) + d_G^c(w_2) \geq 2\delta^c(G)$, which is a contradiction.
        Therefore, $|V(T)|\ge 2\delta^c$.
        This completes the proof of Claim~\ref{cl:2delta}.
    \end{proof}
    
    By Claim~\ref{cl:2delta}, the algorithm outputs a tree of order at least $2\delta^c(G)$. To finish the proof of the theorem, we show that if $|V(T)|=2\delta^{c}(G)$, then $G\in\cG^1\cup\cG^2\cup\cG^4\cup\cG^6$, which contradicts our initial assumption. 
    Suppose that $|V(T)|=2 \delta^{c}(G) \neq|V(G)|$.
    By \eqref{eq:deg-bound-final}, we have
    $$
    2 \delta^{c}(G)(m-2) \leq(m+2 \beta+1)(m-2).
    $$
    Furthermore, \eqref{eq:size-lb} yields
    $$
    2 \delta^{c}(G) \geq m+2 \beta+1.
    $$
    Therefore, if $m \geq 3$, then $2\delta^c(G) = m+2\beta+1$ and the inequalities \eqref{eq:size-lb}--\eqref{eq:deg-bound-final} hold with equality.
    If $m = 2$, then the condition $|V(T)| = 2\delta^c(G)$ ensures that equality holds in \eqref{eq:deg-bound-single}, \eqref{eq:deg-bound-sum}, and \eqref{eq:deg-bound-final}.
    Thus, for $m \geq 2$, we have $d^c_G(x_i) = \delta^c(G)$ for each $i\in[m]$. 
    In particular, $|E(G[\{x_1, \dots, x_m\}])| = \frac{m(m-1)}{2}$, meaning that $G[\{x_1, \dots, x_m\}]$ is a complete graph.

    We distinguish two cases based on whether $m\geq 2$ or $m=1$.

    \begin{case}
        $m \ge 2$.
    \end{case}

    In this case, we show that $G \in \cG^1 \cup \cG^2 \cup \cG^4$.

    \begin{cl}\label{cl:adjacency-tight}
    For each $i\in[m]$, the vertex $x_i$ is adjacent to every vertex in $(V(W_i)\cup S)\setminus \{x_1,\dots,x_m\}$. Moreover, the edges from $x_i$ to these vertices receive pairwise distinct colors, and none of these colors belong to $C(x_i)$. Furthermore, if $|V(W_i)\cap S| \ge 2$, then for every $x \in V(W_i)\cap S$ we have $|C(x)|=1$, and for every $j \neq i$ the color of the edge $xx_j$ lies in $C(x)$.
    \end{cl}

    \begin{proof}
    The first statement follows from equality in \eqref{eq:deg-bound-single}.
    Equality can hold only if $x_i$ is adjacent to all vertices in $V(W_i) \setminus \{x_i\}$ and to all vertices in $S \setminus \{x_1, \dots, x_m\}$, and these edges have distinct colors.
    Furthermore, these colors do not belong to $C(x_i)$.

    For the second statement, since $\sum_{i=1}^m |C(x_i)| = r-\beta$, we have $|C(x)| = 1$ for each $x \in V(W_i) \cap S \setminus \{x_i\}$.
    Since the choice of $x_i \in V(W_i) \cap S$ is arbitrary, it follows that $|C(x)| = 1$ for all $x \in V(W_i) \cap S$.
    By the first statement, for each $j \neq i$, we have $xx_j \in E(G)$ and $c(xx_j) \in C(x)$.
    This completes the proof of Claim~\ref{cl:adjacency-tight}.
    \end{proof}

    \begin{subcase}
        $m \ge 3$.
    \end{subcase}

    By the equality in \eqref{eq:size-lb-mid}, we have $|V(T)|=\beta+r+1$.
    Since $|V(T)|=m+2\beta+1$, we have $r=m+\beta=|S|$.
    Hence, every vertex in $S$ is the center of exactly one color in $R$.
    In particular, $|C(v)|=1$ for all $v \in S$.
    By the equality in \eqref{eq:size-lb-mid}, we have $\sum_{i=1}^m |W_i| = m+\beta$ and $|W_{m+1}|=\cdots=|W_{r+1}|=1$.
    For $i\in[m]$, we have $|W_i| \geq |V(W_i)\cap S| = 1+\beta_i$.
    Furthermore, we have $\sum_{i=1}^m(1+\beta_i)=m+\beta$.
    Thus, $|W_i|=|V(W_i)\cap S|$ for each $i\in[m]$.

    Next, we show that each $W_i$ has exactly one vertex.
    
    \begin{cl}\label{cl:Wi-singleton}
        For each $i\in[m]$, we have $V(W_i)=\{x_i\}$.
        Consequently, $\beta=0$, $r=m$, and $S=\{x_1,\dots,x_m\}$.
    \end{cl}

    \begin{proof}
        Assume that $|W_i| \geq 2$ for some $i \in [m]$.
        Choose $x_i' \in V(W_i) \setminus \{x_i\}$.
        By Claim~\ref{cl:adjacency-tight}, we have $c(x_i' x_j) \in C(x_i')$ for each $j \neq i$.
        Hence,
        $$
        \bigl|\{c(x_i' x_j)\colon j \neq i \text{ and } c(x_i' x_j) \notin C(x_i')\}\bigr| = 0.
        $$
        Now fix $i\in[m]$.
        We have $|W_i| = 1 + \beta_i$ and $|C(x_i)|=1$.
        By the equality in \eqref{eq:deg-bound-single}, we have
        $$
        \delta^{c}(G) = d_G^{c}(x_i) = 1 + \beta + |\{c(x_i x_j)\colon j \neq i \text{ and } c(x_i x_j) \notin C(x_i)\}|.
        $$
        Thus, for any $1 \leq i < i' \leq m$, we have
        $$
        |\{c(x_i x_j)\colon j \neq i \text{ and } c(x_i x_j) \notin C(x_i)\}| = |\{c(x_{i'}x_j)\colon j \neq i' \text{ and } c(x_{i'} x_j) \notin C(x_{i'})\}|.
        $$
        Since $G[\{x_1,\dots,x_m\}]$ is a complete graph, we have
        \[
        \bigl|\{c(x_i x_j)\colon j \neq i \text{ and } c(x_i x_j) \notin C(x_i)\}\bigr| = \frac{m-1}{2}
        \]
        for each $i\in[m]$.
        The vertices $x_i$ and $x_i'$ play symmetric roles in $W_i$.
        Therefore, we can replace $x_i$ by $x_i'$.
        Since $m \geq 3$, we have $\frac{m-1}{2} > 0$.
        This is a contradiction.
        Hence, $|W_i|=1$ for all $i\in[m]$.
        Then $\beta_i=0$ for all $i\in[m]$.
        Thus, $\beta=0$ and $r=m$.
        Since $|S|=r$, we have $S=\{x_1,\dots,x_m\}$.
        This completes the proof of Claim~\ref{cl:Wi-singleton}.
    \end{proof}

    Since $r=m$, exactly $r+1-m=1$ component is disjoint from $S$, and this component is a singleton.
    Therefore, $V(T) = S \cup \{z\}$ for a unique vertex $z \in V(T) \setminus S$.
    Define an orientation $D$ of $G[S]$ by directing the edge $x_i x_j$ from $x_i$ to $x_j$ whenever $c(x_i x_j) \in C(x_i)$.
    Then $D$ is a regular tournament in which each vertex has an in-degree of $(m-1)/2$, and hence $m$ is odd.

    \begin{cl}\label{cl:strong-connected}
        The tournament $D$ is strongly connected.
    \end{cl}
    
    \begin{proof}
        Suppose for a contradiction that $D$ is not strongly connected.
        Then there is a partition of $V(D)$ into two nonempty sets $A$ and $B$ such that all arcs between $A$ and $B$ are directed from $A$ to $B$.
        Let $a = |A|$ and $b = |B|$.
        Then $a+b=m$ and $a,b \geq 1$.
        By summing the out-degrees of the vertices in $A$, we obtain
        \[
        \frac{a(m-1)}{2} = \sum_{v \in A} d^+(v) = \frac{a(a-1)}{2} + ab.
        \]
        Since $a \geq 1$, dividing by $a/2$ yields $m-1 = a-1+2b$.
        Thus, $m = a+2b$.
        Recall that $a+b=m$.
        Then we have $b=0$, which contradicts $b \geq 1$.
        Hence, $D$ is strongly connected.
    \end{proof}

    \begin{cl}\label{cl:core-tree-avoid}
        For each $x_i \in S$, the complete graph $G[S]$ admits a rainbow spanning tree $T_i$ that contains no colors in $C(x_i)$.
    \end{cl}
    
    \begin{proof}
        By Claim~\ref{cl:strong-connected}, $D$ is strongly connected.
        Hence, $D$ contains an in-arborescence $A_i$ rooted at $x_i$.
        Let $T_i$ be the spanning tree of $G[S]$ whose edge set is the set of underlying undirected edges of $A_i$.
        By the definition of an in-arborescence, every non-root vertex $x_j$ has exactly one outgoing arc in $A_i$.
        Therefore, $T_i$ contains exactly one edge with color in $C(x_j)$.
        Since $x_i$ is the root of $A_i$, no edge of $T_i$ has color in $C(x_i)$.
        Recall that $|C(x_j)| = 1$ for all $x_j \in S$.
        Therefore, $T_i$ is a rainbow spanning tree of $G[S]$, and contains no color in $C(x_i)$.
    \end{proof}
    
    \begin{cl}\label{cl:indep}
        The set $V(G) \setminus S$ is an independent set in $G$.
    \end{cl}

    \begin{proof}
        Note that there is no edge between $z$ and $V(G) \setminus V(T)$, it suffices to show that $V(G) \setminus V(T)$ is independent.
        Assume for a contradiction that $V(G) \setminus V(T)$ is not an independent set.
        Since $G$ is connected, there exist adjacent vertices $x, y \in V(G) \setminus V(T)$ such that $x$ is adjacent to a vertex in $V(T)$.
        Let $x_i \in V(T)$ be a neighbor of $x$.
        Since every monochromatic subgraph of $G$ is a star, Claim~\ref{cl:rbmt} implies that $c(xy) \neq c(xx_i)$ and $c(xy) \notin \{c_1, c_2, \dots, c_r\}$.
        By Claim~\ref{cl:core-tree-avoid}, the graph $G[\{x_1, x_2, \dots, x_m\}]$ contains a rainbow spanning tree $T^{\prime}$ that avoids the color $c(xx_i)$.
        Hence, $T^{\prime} + xx_i + xy$ is a rainbow tree.
        This contradicts Steps~\ref{step:exchange-one-edge}--\ref{step:exchange-two-edge} of Algorithm~\ref{algo:rainbow-tree}, because the algorithm would not have terminated.
        Thus, $V(G) \setminus V(T)$ is independent.
        This completes the proof of Claim~\ref{cl:indep}.
    \end{proof}

    Let $v \in V(G) \setminus S$.
    For each neighbor $x_i \in S$ of $v$, the edge $vx_i$ has the unique color in $C(x_i)$, and $v$ has no neighbors outside $S$.
    Thus, we have
    $d^c_G(v) = |N_G(v) \cap S|$.
    Recall that $\delta^c(G) = d^c_G(x_i) = \frac{m+1}{2}$ for each $x_i \in S$.
    Hence, every vertex $v \in V(G) \setminus S$ satisfies $|N_G(v) \cap S| \geq \delta^c(G) = \frac{m+1}{2}$.
    Therefore, $G$ is a split graph with clique $S$ and independent set $V(G) \setminus S$.
    Each vertex in the independent set has at least $\frac{m+1}{2}$ neighbors in $S$.
    Furthermore, the edge-coloring of $G[S]$ is induced by the tournament $D$, in the sense that the color of each edge is determined by its tail in $D$.
    Thus, $G$ belongs to the class $\cG^1$ when $m \ge 3$.

    \begin{subcase}
        $m=2$.
    \end{subcase}

    The detailed proof is deferred to Appendix~\ref{app:m1m2}, since it closely follows the argument of Hu et al.~\cite{hu2022maximum}, with only minor modifications for our setting.
    It shows that if $m=2$, then either $V(W_2)\subseteq S$, in which case $G\in \cG^2$,
    or $V(W_2)\not\subseteq S$, in which case $G\in \cG^4$.

    \begin{case}
        $m=1$.
    \end{case}

    The detailed proof is deferred to Appendix~\ref{app:m1m2}, since it closely follows the argument of Hu et al.~\cite{hu2022maximum}, with only minor modifications for our setting.
    It shows that if $m=1$, then $G\in \cG^6$.
    \medskip
    
    Finally, let us elaborate on the time complexity of the algorithm. The while-loop iterates at most $\min\{|V(G)|, 2\delta^c(G)+1\} \leq |V(G)|$ times.
    Each iteration performs at most $O(|V(T)| \cdot |E(T)|^2)$ calls to the matroid intersection algorithm.
    Each call can be solved in polynomial time.
    Therefore, Algorithm~\ref{algo:rainbow-tree} runs in polynomial time. This completes the proof of Theorem~\ref{thm:rb-star}.
\end{proof}

\subsection{General case}
\label{sec:general}

This section is devoted to the proof of our main result.
The algorithm is presented as Algorithm~\ref{algo:properly-tree}.

\begin{algorithm}[h]
\caption{\textsf{BuildProperlyColoredTree}$(G)$}\label{algo:properly-tree}
\DontPrintSemicolon

\KwIn{A connected edge-colored graph $G=(V,E)$ with edge coloring $c\colon E\to[k]$.}
\KwOut{A properly colored tree $T$ with $|V(T)|\ge \min\{|V(G)|,2\delta^c(G)+1\}$, if one exists, otherwise \textsf{NO}.}

\medskip

Let $\delta_0\geq 3$ be any fixed constant.\;
$\delta^c \gets \delta^c(G)$.\;

\If{$\delta^c \le \delta_0$\label{step:constant}}{
    \eIf{an exhaustive search finds a properly colored tree $T$ of order $\min\{|V(G)|,2\delta^c+1\}$ in $G$}{
            \Return{$T$}.\;
        }{
            \Return{\textsf{NO}}.\;
        }
}

\lIf{$G \in \cG^1 \cup \cG^2 \cup \cG^3 \cup \cG^5$}{\Return{\textsf{NO}}.\label{step:extremal}}

$G^{\star} \gets G$.\;
\While{there exists an edge $e=vw \in E(G^{\star})$ such that $d^{c}_{G^{\star}-e}(v)=d^{c}_{G^{\star}}(v)$ and $d^{c}_{G^{\star}-e}(w)=d^{c}_{G^{\star}}(w)$\label{step:while-s}}{
    \eIf{$e$ is a cut edge in $G^{\star}$}{
        Construct a properly colored tree $T$ of order $2\delta^c+1$ (as per Lemma \ref{lem:bridge-certificate}).\;
        \Return{$T$}.\label{step:cut-end}\;
    }{
        $G^{\star} \gets G^{\star}-e$.\label{step:delete}\;
    }
}
$G' \gets G^{\star}$.\;
Construct an edge-colored graph $\bar{G}'$ from $G'$ by recoloring each monochromatic component $S$ of $G'$ with a distinct new color $c_S$.\label{step:recolor}\;

\eIf{$\bar{G}' \in \cG^1 \cup \cG^2 \cup \cG^4 \cup \cG^6$}{
    Choose an edge $e_0$ deleted in Step~\ref{step:delete} and an inclusion-minimal edge set $E^\bullet$ with $e_0 \notin E^\bullet$ such that each monochromatic component of $G'' = G' \cup \{e_0\} \setminus E^\bullet$ is a star and $\delta^c(G'') = \delta^c$.\;
    Reconstruct $\bar{G}''$ from $G''$ by recoloring its monochromatic components as in Step~\ref{step:recolor}.\label{step:recolor-again}\;
    $T \gets \textsf{BuildRainbowTree}(\bar{G}'')$.\;
}{
    $T \gets \textsf{BuildRainbowTree}(\bar{G}')$.\;
}
Restore the original colors of $E(T)$ as in $G$.\;
\Return{$T$}.\;
\end{algorithm}

\begin{thm}\label{thm:main}
    Given a connected edge-colored graph $G$, Algorithm~\ref{algo:properly-tree} finds a properly colored tree of order at least $\min\{|V(G)|, 2\delta^c(G)+1\}$ in polynomial time, if one exists.
\end{thm}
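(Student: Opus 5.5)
We follow Algorithm~\ref{algo:properly-tree} step by step and check that each step either returns a correct answer or hands a valid instance to \textsf{BuildRainbowTree}. Throughout we use Theorem~\ref{thm:PC-ex} as the existence oracle and Theorem~\ref{thm:rb-star} as the algorithmic engine. If $\delta^c(G)\le\delta_0$, the target order $2\delta^c+1$ is bounded by a constant. Exhaustive search over all vertex subsets and edge sets of that size therefore takes polynomial time and is exact. Otherwise, Step~\ref{step:extremal} returns \textsf{NO} exactly for the graphs that Theorem~\ref{thm:PC-ex} identifies as having no suitable tree, and membership can be tested in polynomial time by Remark~\ref{re:extre-test}. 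So from now on we may assume $G\notin\cG^1\cup\cG^2\cup\cG^3\cup\cG^5$ and that a tree of the required order exists.

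\emph{Preprocessing.} Deleting an edge $e=vw$ whose removal preserves the color degrees of both endpoints keeps the minimum color degree equal to $\delta^c$. Any tree in $G^\star$ is a tree in $G$ with the same colors, so properly colored trees found in $G^\star$ are valid for $G$. If such an $e$ is a cut edge, we need the bridge certificate of Lemma~\ref{lem:bridge-certificate}. The two sides $A\ni v$ and $B\ni w$ each still give $v$ and $w$ full color degree without $e$. We grow greedily at $v$ inside $A$: choose $\delta^c$ distinct colors avoiding $c(e)$ where possible, giving at least $\delta^c-1$ edges of colors different from $c(e)$. Do the same at $w$ inside $B$, then join the two stars by $e$. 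Because the stars lie on opposite sides of the bridge, the result is a tree. It is properly colored because the colors at $v$ are distinct and avoid $c(e)$, and likewise at $w$. Since $e$ is redundant for the color degree, $v$ has $\delta^c$ colors other than through $e$, so we actually get $\delta^c-1$ or $\delta^c$ edges per side. A little care is needed: if $c(e)$ appears among $v$'s other edges, we still get $\delta^c-1$ colors different from $c(e)$. The count then needs at least $2\delta^c+1$ vertices, and we handle any deficit by extending along one further edge, using $\delta^c\ge\delta_0\ge3$. When no deletable edge remains, $G'$ is connected and every edge is essential for the color degree of at least one endpoint. This implies every monochromatic component is a star: a monochromatic path $xyz w'$ of length three, or a monochromatic triangle, would contain an edge essential to neither endpoint.

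\emph{Reduction to the star-colored case.} Recoloring each monochromatic component (a star) with a fresh color produces a star-colored graph $\bar G'$ with $\delta^c(\bar G')=\delta^c(G')=\delta^c$. Each vertex is the center of at most one star per original color, and a leaf meets each star in one edge. A rainbow tree in $\bar G'$ is properly colored in $G'$: two adjacent edges with the same original color at a vertex $x$ would lie in the same monochromatic component, hence have the same new color. If $\bar G'\notin\cG^1\cup\cG^2\cup\cG^4\cup\cG^6$, Theorem~\ref{thm:Rb-ex} guarantees a large rainbow tree, and Theorem~\ref{thm:rb-star} finds one.

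\emph{Main obstacle.} The hard step is the case $\bar G'\in\cG^1\cup\cG^2\cup\cG^4\cup\cG^6$ even though $G$ itself is not extremal. Here the deletions destroyed the good tree, so some deleted edge $e_0$ must be re-added. Our plan has three parts. First, argue that at least one edge was deleted: otherwise $G=G'$, and comparing the definitions shows that $\bar G'\in\cG^4$ (resp.\ $\cG^6$) forces $G\in\cG^3$ (resp.\ $\cG^5$), and $\cG^1,\cG^2$ are preserved. Second, show by case analysis over the explicit structure of the extremal families that $G''=G'+e_0-E^\bullet$ keeps $\delta^c$ and has star monochromatic components, with $E^\bullet$ removing only edges that became redundant or that created non-star monochromatic components. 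Third, show that $\bar G''$ is no longer extremal. The intended reasoning is that $e_0$ adds an edge inside the independent set $I$ or changes a color pattern rigidly prescribed by the family, and that $E^\bullet$ (being minimal) cannot restore the split/tournament structure while keeping $\delta^c$. Then Theorems~\ref{thm:Rb-ex} and~\ref{thm:rb-star} again give the tree, which is properly colored in $G$.

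Running time is polynomial: there are at most $|E|$ deletions, each with a polynomial test; the extremal tests follow Remark~\ref{re:extre-test}; there are at most $|E|$ choices of $e_0$ with a greedy minimal $E^\bullet$; and Theorem~\ref{thm:rb-star} runs in polynomial time.
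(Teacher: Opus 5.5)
\textbf{Where the gap is.} Most of your outline matches the paper: the exhaustive search for $\delta^c\le\delta_0$, the bridge certificate, the star structure of $G'$, the recoloring, and the case $\bar G'\notin\cG^1\cup\cG^2\cup\cG^4\cup\cG^6$. Two small remarks on the bridge certificate. Redundancy of $e$ means $c(e)$ always reappears at $v$. Your ``one further edge'' is the paper's leaf $y$ of the $v$-star: $y$ sees at most $\delta^c-1$ colors toward $v$ and the other leaves, one of them $c(yv)$, so it has an edge $yz$ to a new vertex with $c(yz)\neq c(yv)$. Your observation that some edge must have been deleted when $\bar G'$ is extremal (otherwise $G$ itself lies in $\cG^1$, $\cG^3$ or $\cG^5$) is correct, and the paper leaves it implicit.

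The genuine gap is the case $\bar G'\in\cG^1\cup\cG^4\cup\cG^6$, which carries essentially all of the difficulty. For it you give only ``intended reasoning''. The claim that $\bar G''$ is no longer extremal is not proved. You would have to exclude that $\bar G''$ is isomorphic to some member of any of the four families, possibly with a different split partition or tournament, and ``a minimal $E^\bullet$ cannot restore the structure'' is not an argument. Your sketch also misdescribes the situation: $e_0$ need not lie inside $I$. It can be $v_0u_p$ for $\cG^4$, or join $K_m$ to $I$ for $\cG^1$. You also never show that an admissible $E^\bullet$ with $\delta^c(G'')=\delta^c$ exists.

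\textbf{The missing idea.} Bypass non-extremality and, equivalently by Theorem~\ref{thm:Rb-ex}, exhibit a rainbow tree of order $2\delta^c+1$ in $\bar G''$ directly. Write $e_0=xy$ with $y\in I$, and let $Q_0$ be the monochromatic component of $G''$ containing $e_0$. It suffices to find a rainbow tree $T_0$ of order $2\delta^c$ with $x\in V(T_0)$, $y\notin V(T_0)$ and $E(T_0)\cap E(Q_0)=\emptyset$, and then take $T_0+e_0$.

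What makes this tractable is that every deletion preserves the color set at each vertex. So the original color of $e_0$ already occurs at both of its endpoints in $G'$. This pins $E^\bullet$ down to a small explicit set: for example, $E^\bullet=\{v_su_1,v_su_{m+1}\}$ in the $\cG^6$ case when $e_0=u_1u_{m+1}$ carries the color of the star at $v_s$. One then builds $T_0$ from the explicit structure of each family:
\begin{itemize}
\item For $\cG^6$: a rainbow path $v_1\cdots v_m$, an edge $xv_1$ whose original color differs from that of $e_0$, and a matching $v_iu_i$ into $I\setminus\{x,y\}$ for $2\le i\le m$.
\item For $\cG^4$: a rainbow spanning tree of $C$ avoiding $E(Q_0)\cap E(C)$, together with $v_1w$, $wu_{j_1}$ and the edges $v_iu_{j_i}$.
\item For $\cG^1$: the key fact is that a regular tournament on $m\ge 5$ vertices stays strongly connected after deleting any arc. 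An in-arborescence rooted at $v_t$ in $D-e$ then yields a rainbow spanning tree of $K_m-e$ avoiding the color of $v_tx'$.
\end{itemize}
Without such constructions, or an actual proof of non-extremality, your final appeal to Theorems~\ref{thm:Rb-ex} and~\ref{thm:rb-star} for $\bar G''$ is unsupported.
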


\begin{proof}
    Let $\delta^c=\delta^c(G)$. 
    If $\delta^c \le \delta_0$, then the algorithm checks all trees on at least $2\delta^c+1$ vertices and returns one if such a tree exists.
    
    By Theorem~\ref{thm:PC-ex}, if $G\in\cG^1\cup\cG^2\cup\cG^3\cup\cG^5$, then $G$ has no properly colored tree of order at least $\min\{|V(G)|, 2\delta^c+1\}$, and hence the algorithm correctly returns NO in Step~\ref{step:extremal}.
    If $G\in \cG^2$, then $\delta^c(G)=2$, a case resolved in Step~\ref{step:constant}. 
    Otherwise, we have $\delta^c(G)\neq 2$, and $\delta^c(G')=\delta^c(G)$ yields $\delta^c(G')\neq 2$, hence we may assume that $G\notin \cG^2$ and $\bar{G}' \notin \cG^2$.
    
    \begin{lem}\label{lem:bridge-certificate}
    For any graph $G_0$ with $\delta^c(G_0) = \delta^c$, if $e=vw \in E(G_0)$ is a cut edge such that $d^{c}_{G_0-e}(v)=d^{c}_{G_0}(v)$ and $d^{c}_{G_0-e}(w)=d^{c}_{G_0}(w)$, then one can construct a properly colored tree $T$ of order at least $2\delta^c+1$ in polynomial time.
    \end{lem}
    
    \begin{proof}
    Let the two connected components of $G_0-e$ be $G_1$ and $G_2$, and assume $v\in V(G_1)$ and $w\in V(G_2)$.
    Since $d^{c}_{G_0-e}(v) = d^{c}_{G_0}(v)$, there exists an edge in $G_1$ incident to $v$ whose color is $c(vw)$. 
    Hence there exists a vertex $u\in V(G_1)$ such that $uv\in E(G_0)$ and $c(uv)=c(vw)$.
    Similarly, there exists a vertex $x\in V(G_2)$ such that $wx\in E(G_0)$ and $c(wx)=c(vw)$. Let $A_v \subseteq N_{G_1}(v)$ be a largest subset such that $u \in A_v$ and the edges between $v$ and $A_v$ have pairwise distinct colors.
    Let $A_w$ be defined analogously in $G_2$ with $x\in A_w$.
    Then $|A_v|=d^{c}_{G_1}(v)=d^{c}_{G_0}(v)\ge \delta^c$ and $|A_w|=d^{c}_{G_2}(w)=d^{c}_{G_0}(w)\ge \delta^c$.

    First, assume that $|A_v| \geq \delta^c+1$.
    We choose a subset $A'_v \subseteq A_v$ such that $|A'_v| = \delta^c+1$ and $u \in A'_v$,  as well as a subset $A'_w \subseteq A_w$ such that $|A'_w| = \delta^c$ and $x \in A'_w$.
    Let $T$ be the subgraph of $G_0$ with vertex set $(A'_v \setminus \{u\}) \cup (A'_w \setminus \{x\}) \cup \{v,w\}$ and edge set $\{va\colon a \in A'_v \setminus \{u\}\} \cup \{wb\colon b \in A'_w \setminus \{x\}\} \cup \{vw\}$.
    Then $T$ is a tree.
    The order of $T$ is $(|A'_v|-1) + (|A'_w|-1) + 2 = 2\delta^c+1$.
    The edges $va$ for $a \in A'_v \setminus \{u\}$ have pairwise distinct colors, and none of them has color $c(vw)$.
    Similarly, the edges $wb$ for $b \in A'_w \setminus \{x\}$ have pairwise distinct colors, and none of them has color $c(vw)$.
    Therefore, $T$ is a properly colored tree of order $2\delta^c+1$.
    The case $|A_w| \geq \delta^c+1$ follows by symmetry.
    
    It remains to consider the case $|A_v|=|A_w|=\delta^c$.
    Let $y\in A_v\setminus\{u\}$ arbitrary, and set $B\coloneqq (A_v\setminus\{u,y\})\cup\{v\}$. Then $|B|=\delta^c-1$. 
    The edges between $y$ and $B$ use at most $\delta^c-1$ distinct colors.
    Since $d^{c}_{G_0}(y) \geq \delta^c$, there is a vertex $z \in V(G_1) \setminus B$ such that $yz \in E(G_0)$ and $c(yz) \neq c(yv)$.
    Note that $z \notin A_v \setminus \{u\}$ and $z \neq v$.
    Let $T$ be the subgraph with vertex set $V(T) = (A_v \setminus \{u\}) \cup (A_w \setminus \{x\}) \cup \{v,w,z\}$ and edge set $E(T) = \{va\colon a \in A_v \setminus \{u\}\} \cup \{wb\colon b \in A_w \setminus \{x\}\} \cup \{vw, yz\}$.
    The graph $T$ is a tree.
    It consists of two stars centered at $v$ and $w$ joined by the edge $vw$, and an extra leaf $z$ attached to $y$.
    The order of $T$ is $2\delta^c+1$.
    Next, we verify that $T$ is properly colored.
    At $v$, the edge $vw$ has color $c(vw)$, and no edge $va$ for $a \in A_v \setminus \{u\}$ has this color.
    Similarly, no edge $wb$ for $b \in A_w \setminus \{x\}$ has  color $c(vw)$.
    The two incident edges $yv$ and $yz$ have different colors because $c(yz) \neq c(yv)$.
    All other edges in the two stars have pairwise distinct colors.
    Therefore, $T$ is a properly colored tree of order $2\delta^c+1$.
    
    All the sets defined above and the vertex $z$ can be found by checking the incident edges and their colors.
    Therefore, this construction runs in polynomial time.
    \end{proof}
    
    Consider Steps~\ref{step:while-s}-\ref{step:delete} of Algorithm~\ref{algo:properly-tree}. Observe that $\delta^c(G^\star) = \delta^c$ holds throughout the loop. 
    If the chosen edge $e$ is a cut edge then, by Lemma~\ref{lem:bridge-certificate}, the algorithm returns a properly colored tree of order at least $2\delta^c+1$.
    Otherwise, the loop terminates with a connected graph $G'$ satisfying $\delta^c(G') = \delta^c$. 
    Moreover, for every edge $e = vw \in E(G')$, deleting $e$ reduces the color degree of at least one of $v$ and $w$.
   
    \begin{cl}\label{cl:mono-star-Gstar}
    Every monochromatic component of $G'$ is a star.
    \end{cl}
    
    \begin{proof}
    Let $S$ be a monochromatic component of $G'$. Suppose first that $S$ contains a cycle, and let $e$ be an edge on this cycle. Then $G' - e$ is connected, so $e$ would be deleted by Algorithm~\ref{algo:properly-tree}, contradicting the termination of its while-loop. Hence, $S$ is a tree.

    We now show that $S$ is in fact a star. Suppose for contradiction that $S$ is not a star, and let $v_1v_2v_3v_4$ be a monochromatic path in $S$. Set $e=v_2v_3$. Then $d^c_{G'-e}(v_2)=d^c_{G'}(v_2)$ and $d^c_{G'-e}(v_3)=d^c_{G'}(v_3)$. If $G'-e$ is disconnected, then $e$ is a cut edge. By Lemma~\ref{lem:bridge-certificate}, the algorithm would output a properly colored tree of order at least $2\delta^c+1$ in Step~\ref{step:cut-end}, a contradiction. Otherwise, $G'-e$ is connected, so $e$ would again be deleted by the algorithm, contradicting termination.
    \end{proof}

    In Step~\ref{step:recolor}, we construct $\bar{G}'$ from $G'$ by assigning a distinct new color to each monochromatic component of $G'$.
    By Claim~\ref{cl:mono-star-Gstar}, every color class in $\bar{G}'$ induces a star.
    Hence, $\bar{G}'$ is star-colored and $d^c_{\bar{G}'}(v) = d^c_{G'}(v)$ for all $v \in V(\bar{G}')$.

    \begin{cl}\label{cl:pc-rainbow-equivalence}
    Let $F$ be either $G'$ or $G''$.
    Let $H$ be a subgraph of $F$, and let $\bar{H}$ be the corresponding subgraph of $\bar{F}$ with $E(\bar{H})=E(H)$.
    Then $H$ is properly colored in $F$ if and only if $\bar{H}$ is rainbow in $\bar{F}$.
    \end{cl}

    \begin{proof}
    In $\bar{F}$, two distinct edges have the same color exactly when they come from the same monochromatic component of $F$.
    Since every monochromatic component of $F$ is a star, this is equivalent to saying that the two edges are adjacent and have the same color in $F$.
    Hence $H$ is properly colored in $F$ exactly when $\bar{H}$ is rainbow in $\bar{F}$.
    \end{proof}

    Now we prove that subroutine \textsf{BuildRainbowTree} finds a rainbow tree of order $\min\{|V(G)|, 2\delta^c+1\}$.
    Recall that $\bar{G}' \notin \cG^2$, so we consider two cases according to whether $\bar{G}' \in \cG^1 \cup \cG^4 \cup \cG^6$.
    
    \begin{case}
        $\bar{G}' \notin  \cG^1 \cup \cG^4 \cup \cG^6$. 
    \end{case}
    
    By Theorem~\ref{thm:rb-star} and Theorem~\ref{thm:Rb-ex}, subroutine \textsf{BuildRainbowTree} finds a rainbow tree $T$ of order at least $\min\{|V(G)|, 2\delta^c+1\}$.
    By Claim~\ref{cl:pc-rainbow-equivalence}, the tree $T$ corresponds to a properly colored tree in $G$.
    
    \begin{case}
        $\bar{G}' \in \cG^1 \cup \cG^4 \cup \cG^6$.
    \end{case}
    
    Suppose that $\bar{G}'\in \cG^1\cup\cG^4\cup\cG^6$.
    The algorithm constructs $G''$ by choosing an edge $e_0=xy$ deleted in Step~\ref{step:delete} and an inclusion-minimal edge set $E^\bullet$.
    Note that $\delta^c(\bar{G}'') = \delta^c$.
    Therefore, it suffices to show that $\bar{G}''$ contains a rainbow tree of order $2\delta^c+1$.

    Let $H=G'\cup\{e_0\}$.
    Let $\alpha$ be the original color of $e_0$ in $H$, and let $S$ be the monochromatic component of color $\alpha$ in $H$ that contains $e_0$.
    By Claim~\ref{cl:mono-star-Gstar} and the definition of $H$, every monochromatic component of $H$ other than $S$ is a star.
    Hence, we may assume that
    $$
    E^\bullet\subseteq E(S)\setminus\{e_0\}.
    $$    
    In Step~\ref{step:recolor-again} of Algorithm~\ref{algo:properly-tree}, each monochromatic component $Q$ of $G''$ is recolored with a distinct new color $c_Q$.
    This yields a star-colored graph $\bar{G}''$.
    Let $Q_0$ be the monochromatic component of $G''$ containing $e_0$. 
    If $T_0$ is a rainbow tree in $\bar{G}''$ such that $y\notin V(T_0)$ and $E(T_0)\cap E(Q_0)=\emptyset$, then $T_0+e_0$ is a rainbow tree.
    
    \begin{subcase}
        $\bar{G}' \in \cG^6$.
    \end{subcase}

        \begin{figure}[t!]
        \centering
        \includegraphics[width=0.35\linewidth]{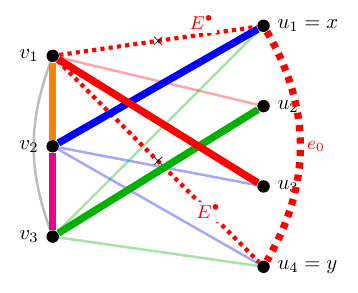}
        \caption{Illustration of Case $\bar{G}' \in \cG^6$ of the proof of Theorem~\ref{thm:main}. Take $\bar G'\cong G^6_{3,1}$. Add $e_0=u_1u_4$. The conflict edges are $E^\bullet=\{v_1u_1,v_1u_4\}$, and the thick edges and $e_0$ form the rainbow tree on $2\delta^c+1=7$ vertices.}
        \label{fig:g6}
    \end{figure}

    Assume that $\bar{G}' \cong G^6_{m,k}$ for some positive integers $m,k$.
    By the definition of  $G^6_{m,k}$, the vertex set $V(\bar{G}')$ can be partitioned into the vertex set $\{v_1, \dots, v_m\}$ of a rainbow complete graph $K_m$ and an independent set $I = \{u_1, \dots, u_{m+k}\}$.
    The bipartite graph between $V(K_m)$ and $I$ is complete.
    In particular, $\delta^c(\bar{G}')=m$, and hence $\delta^c(\bar{G}'')=m$.
    Thus, any pair of non-adjacent vertices in $G'$ both belong to $I$.
    Therefore, for $e_0=xy$, we have $x,y \in I$.
    Note that $E^\bullet$ contains no edge incident to $I\setminus\{x,y\}$. Without loss of generality, let $x=u_1$ and $y=u_{m+1}$.
    Choose $m-1$ distinct vertices $u_2,\dots,u_m\in I\setminus\{x,y\}$.
    By the above argument, for each $i$ with $2 \le i \le m$, the edge $v_i u_i$ is in $\bar{G}''$.
    
    Note that $Q_0$ is a star centered at either $x$ or $y$.
    Since $d^c_{G''}(x)=m\ge 2$, there exists an edge $xv_s\in E(G'')$ whose original color is distinct from that of $e_0$.
    By relabeling the vertices of $K_m$, we may assume that $s=1$.
    Let $e_x=v_1x$ and $P=v_1v_2\cdots v_m$ be a rainbow path in $\bar{G}''$.
    Since $K_m$ is rainbow in $\bar{G}'$ and $E^\bullet \cap E(K_m) = \emptyset$, such a path $P$ exists.
    Let $T_0$ be the graph in $\bar{G}''$ with edge set
    $$
    E(T_0)=E(P)\cup\{e_x\}\cup\{v_i u_i\colon 2\le i\le m\}.
    $$
    Observe that $|V(T_0)|=2m$.
    Furthermore, $T_0$ is rainbow in $\bar{G}''$, and $e_x\notin E(Q_0)$.
    Hence, $E(T_0)\cap E(Q_0)=\emptyset$.
    Since $y\notin V(T_0)$, the graph $T=T_0+e_0$ is a rainbow tree of order $2m+1=2\delta^c(\bar{G}'')+1$ in $\bar{G}''$; see Figure~\ref{fig:g6}.
    
    \begin{subcase}
        $\bar{G}' \in \cG^4$.
    \end{subcase}

        \begin{figure}[t!]
    \centering
    \begin{subfigure}[t]{0.48\textwidth}
        \centering
        \includegraphics[width=0.75\textwidth]{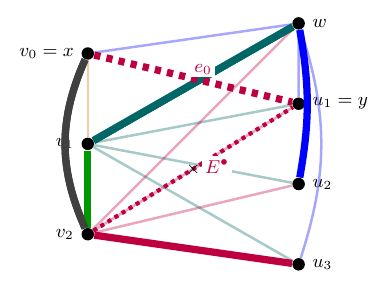}
        \caption{Case: $x=v_0$. Add $e_0=v_0u_1$. The conflict edges are $E^\bullet=\{v_2u_1\}$, and the thick edges and $e_0$ form the rainbow tree on $2\delta^c+1=7$ vertices.}
        \label{fig:g4a}
    \end{subfigure}\hfill
    \begin{subfigure}[t]{0.48\textwidth}
        \centering
        \includegraphics[width=0.75\textwidth]{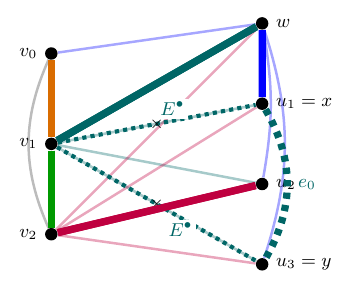}
        \caption{Case: $x\in I$. Add $e_0=u_1u_3$. The conflict edges are $E^\bullet=\{v_1u_1,v_1u_3\}$, and the thick edges and $e_0$ form the rainbow tree on $2\delta^c+1=7$ vertices.}
        \label{fig:g4b}
    \end{subfigure}\hfill
    \caption{Illustration of Case $\bar{G}' \in \cG^4$ of the proof of Theorem~\ref{thm:main}. Take $\bar G'\cong G^4_{2,1}$. }
    \label{fig:g4}
    \end{figure}
    
    Assume that $\bar{G}' \cong G^4_{m,k}$ for some positive integers $m,k$.
    Since $\delta^c(\bar{G}'') = \delta^c(\bar{G}') = m+1$, it follows that $2\delta^c+1=2m+3$.
    We construct a rainbow tree on $2m+3$ vertices in $\bar{G}''$.
    By the definition of $G_{m,k}^4$, the vertex set $V(\bar{G}')$ can be partitioned into the vertex set $\{v_0, v_1, \dots, v_m\}$ of a complete graph $C$, a vertex $w$, and an independent set $I = \{u_1, \dots, u_{m+k}\}$.
    In $G'$, two distinct vertices are non-adjacent if and only if they both belong to $I \cup \{v_0\}$.
    Therefore, the edge $e_0 = xy$ is either $u_pu_q$ for distinct vertices $u_p, u_q \in I$, or $v_0u_p$ for some vertex $u_p \in I$.
    In either case, we may assume without loss of generality that $y \in I$.
    Specifically, if $e_0=u_pu_q$, then set $x=u_p$ and $y=u_q$; if $e_0=v_0u_p$, then set $x=v_0$ and $y=u_p$.
    
    We construct a rainbow tree $T_0$ in $\bar{G}''$ such that $y\notin V(T_0)$ and $E(T_0)\cap E(Q_0)=\emptyset$.
    Since $C$ is rainbow in $\bar{G}'$, the set $E^\bullet$ contains no edge in $C$.
    Thus, $C$ is also a rainbow clique in $\bar{G}''$.
    Therefore, $C$ has a rainbow spanning tree $T_C$.
    If $Q_0$ contains an edge of $C$, then $E(Q_0) \cap E(C)$ consists of exactly one edge.
    Choose $T_C$ such that it does not contain this edge.

    We expand $T_C$ by adding the edge $v_1w$.
    Choose $m$ distinct vertices $u_{j_1},\dots,u_{j_m} \in I\setminus\{y\}$.
    We add the edge $wu_{j_1}$ and the edges $v_i u_{j_i}$ for each $i$ with $2 \le i \le m$.
    Let $T_0$ denote the resulting graph.
    Then $T_0$ is a tree with the vertex set $V(C)\cup\{w\}\cup\{u_{j_1},\dots,u_{j_m}\}$.
    Thus, $|V(T_0)|=(m+1)+1+m=2m+2$.

    If $x=v_0$, then by construction, $T_0$ contains no edge incident to $v_0$ outside of $T_C$.
    Thus $E^\bullet \cap E(T_0) = \emptyset$.
    Then $T_0$ is a tree in $\bar{G}''$.
    Since $T_C$ was chosen to avoid $E(Q_0)\cap E(C)$, we have $E(T_0)\cap E(Q_0)=\emptyset$.
    If $x \in I$, then choose the vertices $u_{j_1}, \dots, u_{j_m}$ so that $x$ is one of them.
    In $T_0$, the vertex $x$ has exactly one incident edge.
    We have $\delta^c(G'') = m+1$.
    All edges of $Q_0$ have the original color of $e_0$.
    Thus, there is an edge $e$ incident to $x$ in $G''$ such that the original color of $e$ is different from that of $e_0$.
    We choose $e$ to be the unique edge incident to $x$ in $T_0$.
    This implies that $E(T_0) \cap E(Q_0) = \emptyset$ and $E^\bullet \cap E(T_0) = \emptyset$.
    Thus $T_0$ is a tree in $\bar{G}''$.
    
    Finally, let $T=T_0+e_0$.
    Since $y\notin V(T_0)$ and $E(T_0)\cap E(Q_0)=\emptyset$, the graph $T$ is a rainbow tree of order $2\delta^c(\bar{G}'')+1$ in $\bar{G}''$; see Figure~\ref{fig:g4}.
    
    \begin{subcase}
        $\bar{G}' \in \cG^1$.
    \end{subcase}

        \begin{figure}[t!]
    \centering
    \begin{subfigure}[t]{0.48\textwidth}
        \centering
        \includegraphics[width=0.85\textwidth]{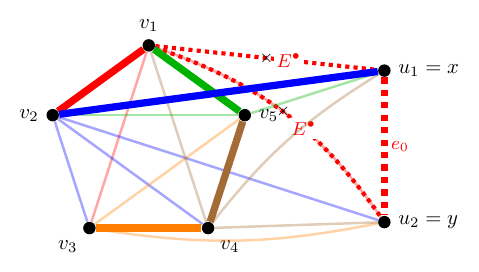}
        \caption{Case: $x\in I$. Add $e_0=u_1u_2$. The conflict edges are $E^\bullet=\{v_1u_2\}$, and the thick edges and $e_0$ form the rainbow tree on $2\delta^c+1=7$ vertices.}
        \label{fig:g1a}
    \end{subfigure}\hfill
    \begin{subfigure}[t]{0.48\textwidth}
        \centering
        \includegraphics[width=0.85\textwidth]{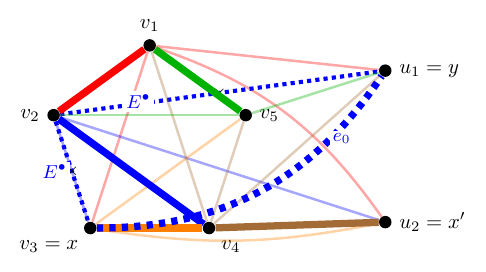}
        \caption{Case: $x\in V(K_m)$.Add $e_0=u_1v_3$. The conflict edges are $E^\bullet=\{v_2v_3\}$, and the thick edges and $e_0$ form the rainbow tree on $2\delta^c+1=7$ vertices.}
        \label{fig:g1b}
    \end{subfigure}\hfill
    \caption{Illustration of Case $\bar{G}' \in \cG^1$ of the proof of Theorem~\ref{thm:main}.}
    \label{fig:g1}
    \end{figure}

    Assume that $\bar{G}' \cong G^1_{m,k}$ for some odd integer $m$ and an integer $k \geq 2$.
    By the definition of $G^1_{m,k}$, the vertex set $V(\bar{G}')$ is partitioned into the vertex set $\{v_1, \dots, v_m\}$ of a complete graph $K_m$ and an independent set $I = \{u_1, \dots, u_k\}$.
    Define an orientation $D$ of $K_m$ by directing the edge $v_i v_j$ from $v_i$ to $v_j$ whenever $c(v_i v_j)$ is the color center at $v_i$. Note that $D$ is a regular tournament on $m\ge 5$ vertices.

    \begin{cl}\label{cl:strong-connected-e}
    For every arc $e\in A(D)$, the digraph $D-e$ is strongly connected.
    \end{cl}
    
    \begin{proof}
    Suppose for a contradiction that $D-e$ is not strongly connected.
    Then there is a partition of $V(D)$ into two nonempty sets $A$ and $B$ such that all arcs between $A$ and $B$ are directed from $A$ to $B$.
    Let $a = |A|$ and $b = |B|$.
    
    Since $D$ is a tournament, in $D$ there are either $ab$ or $ab-1$ arcs from $A$ to $B$, according as $e$ is not, or is, the unique arc from $B$ to $A$.
    Hence
    \[
    a\cdot \frac{m-1}{2}
    =\sum_{v\in A} d_D^+(v)
    =\frac{a(a-1)}{2}+ab-t,
    \]
    where $t\in\{0,1\}$.
    Multiplying by $2/a$ gives $m-1=a-1+2b-\frac{2t}{a}$. Using $m=a+b$, we obtain $b=\frac{2t}{a}$.
    If $t=0$, then $b=0$, a contradiction.
    So $t=1$, and then $ab=2$.
    Since $a,b\ge 1$ and $a+b=m\ge 5$, we have $ab\neq 2$.
    This contradiction proves that $D-e$ is strongly connected.
    \end{proof}
    
    Note that $\delta^c(\bar{G}') = \frac{m+1}{2}$.
    Thus, $2\delta^c(\bar{G}'')+1 = m+2$.
    For any two nonadjacent vertices in $G'$, at least one lies in $I$.
    Thus, for the added edge $e_0=xy$, we may assume that $y \in I$.

    We construct a rainbow tree $T_0$ of order $m+1$ in $\bar{G}''$ such that $y \notin V(T_0)$ and $E(T_0) \cap E(Q_0) = \emptyset$.
    Then $T = T_0+e_0$ is a rainbow tree of order $m+2 = 2\delta^c(\widehat{G'})+1$.
    
    First, assume that $x \in I$.
    We have $d^c_{G''}(x) \geq \delta^c(G'') \geq 2$.
    Thus, there is a vertex $v_t \in V(K_m)$ such that $v_t x \in E(G'')$ and $v_t x \notin E(Q_0)$.
    Fix such a vertex $v_t$.
    By Claim~\ref{cl:core-tree-avoid}, there is a rainbow spanning tree $T_t$ in $K_m$ that contains no color $c(v_t x)$. Let $T_0 = T_t + v_t x$.
    Then $T_0$ is a tree of order $m+1$, and $y \notin V(T_0)$. 
    By the choice of $v_t$, we have $E(T_0) \cap E(Q_0) = \emptyset$.
    
    Now, assume that $x \in V(K_m)$. 
    Since $k \ge 2$, we may choose a vertex $x' \in I \setminus \{y\}$.
    Choose $t \in [m]$ such that $v_t x' \in E(G'')$.
    If $c(e_0)$ is the color center at $x$, then $E^\bullet = \emptyset$, contradicting Step~\ref{step:delete} of Algorithm~\ref{algo:properly-tree}. 
    Thus, there is a unique edge $e$ in $E(K_m)\cap E^{\bullet}$.
    By Claim~\ref{cl:strong-connected-e} and similar argument with the proof of Claim~\ref{cl:core-tree-avoid}, the graph $K_m - e$ admits a rainbow spanning tree $T_t$ that contains no color $c(v_t x')$. Let $T_0 = T_t + v_t x'$.
    Then $T_0$ is a rainbow tree of order $m+1$. Moreover, $E(T_0) \cap E(Q_0) = \emptyset$.

    Finally, let $T=T_0+e_0$.
    Then $T$ is a rainbow tree of order $2\delta^c(\bar{G}'')+1$ in $\bar{G}''$; see Figure~\ref{fig:g1}.
    
    \medskip
    
    In all cases, subroutine \textsf{BuildRainbowTree} finds a rainbow tree of order at least $2\delta^c+1$.
    By Claim~\ref{cl:pc-rainbow-equivalence}, this tree corresponds to a properly colored tree of order at least $\min\{|V(G)|, 2\delta^c+1\}$ in $G$.
    Each step of the algorithm takes polynomial time.
    Therefore, the overall algorithm runs in polynomial time.
    This completes the proof of Theorem~\ref{thm:main}.
\end{proof}

\section{Conclusions}
\label{sec:conclusions}

In this paper, we studied the above-guarantee problem for properly colored trees in connected edge-colored graphs. We also presented a polynomial-time algorithm that, given a connected graph $G$, constructs a properly colored tree of order at least $\min\{|V(G)|,2\delta^c(G)+1\}$ whenever such a tree exists.

We close the paper by mentioning some open problems:

\begin{enumerate}\itemsep0em
    \item The most straightforward question is whether there exists an FPT algorithm, parameterized by $k$, that finds a properly colored tree of order at least $\min\{|V(G)|,2\delta^c(G)+k\}$ in a connected graph $G$. A obstacle to a general above-guarantee FPT result already appears in the preprocessing step. In our proof for the case $k=1$, whenever the deletion process encounters a color degree redundant cut edge, Lemma~\ref{lem:bridge-certificate} gives a local construction of a properly colored tree of order $2\delta^c(G)+1$, so the algorithm can stop immediately. 
    For a target $2\delta^c(G)+k$ with $k\ge 2$, the same cut edge certificate only provides the first extra vertex and gives no mechanism for obtaining the remaining $k-1$ vertices.
    \item Following the framework of Fomin et al.~\cite{fomin2024approxdirac}, a natural question is to study the above-guarantee approximability of the maximum properly colored tree problem.
\end{enumerate}

\medskip
\paragraph{Acknowledgement.}
Yuhang Bai was supported by the National Natural Science Foundation of China (12131013 and 12471334), by China Scholarship Council (202406290002), and by Shaanxi Fundamental Science Research Project for Mathematics and Physics (22JSZ009). The research received further support from the Lend\"ulet Programme of the Hungarian Academy of Sciences (LP2021-1/2021), from the Ministry of Innovation and Technology of Hungary from the National Research, Development and Innovation Fund (ADVANCED 150556, ADVANCED 153096, and ELTE TKP 2021-NKTA-62), and from the Dynasnet European Research Council Synergy project (ERC-2018-SYG 810115).

\bibliographystyle{abbrv}
\bibliography{ref}

\appendix

\section{Remaining cases in the proof of Theorem~\ref{thm:rb-star}}
\label{app:m1m2}

In this appendix we continue the proof of Theorem~\ref{thm:rb-star} in the remaining cases
$m=2$ and $m=1$ under the standing assumptions
\[
|V(T)|=2\delta^c(G)\quad\text{and}\quad V(T)\neq V(G).
\]
We keep all notation from the main proof:
the set of reachable colors $R=\{c_1,\dots,c_r\}$, the components $W_1,\dots,W_{r+1}$ of $G[V(T)]\setminus (E_{c_1}\cup\cdots\cup E_{c_r})$, the sets $C(v)$ and $S$, the integer $m$, and the representatives $x_i\in V(W_i)\cap S$ for $i\in[m]$, as well as $\beta_i = |(S\setminus\{x_1,\dots,x_m\})\cap V(W_i)|$. 
We also use Claims~\ref{cl:rbmt} and \ref{cl:adjacency-tight} from the main proof.

\subsection{Case \texorpdfstring{$m=2$}{m=2}}
\label{app:m2}

Assume that $m=2$.
By equality in \eqref{eq:deg-bound-single}, we have
$d_G^c(x_1)=d_G^c(x_2)=\delta^c(G)$.
Moreover, tightness in the counting step yielding $r+1$ components implies
\begin{equation}\label{eq:m2-singletons}
|W_3|=\cdots=|W_{r+1}|=1.
\end{equation}

\begin{cl}\label{cl:m2-one-component-inside-S}
Either $V(W_1)\subseteq S$ or $V(W_2)\subseteq S$.
\end{cl}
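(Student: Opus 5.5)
Suppose for contradiction that both $V(W_1)\not\subseteq S$ and $V(W_2)\not\subseteq S$, and pick $y_1\in V(W_1)\setminus S$ and $y_2\in V(W_2)\setminus S$. The idea is to bound the color degrees of $y_1$ and $y_2$ and show that their sum falls below $2\delta^c(G)$, as in the $m=1$ argument at the end of the proof of Claim~\ref{cl:2delta}.

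\textbf{Step 1: equality information for $m=2$.} The tight case $|V(T)|=2\delta^c(G)$ gives equality in \eqref{eq:deg-bound-single}--\eqref{eq:deg-bound-final}. Together with \eqref{eq:m2-singletons}, this yields $|W_1|+|W_2|=|V(T)|-(r-1)$ and $|C(x_1)|+|C(x_2)|=r-\beta$. It also gives $d^c_G(x_1)=d^c_G(x_2)=\delta^c(G)$.

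\textbf{Step 2: bounding the color degree of a non-center vertex.} Let $y\in V(W_i)\setminus S$. Since $y\notin S$, every color in $R$ on an edge at $y$ belongs to a center lying in $S$. By Claim~\ref{cl:rbmt}, $y$ has no edges of non-$R$ color leaving $W_i$ inside $V(T)$, and all edges from $y$ to $V(G)\setminus V(T)$ have colors in $R$. Centers outside $V(T)$ carry no $R$-colors by Claim~\ref{cl:center}, so such an edge would have its center at $y$, which is impossible. Hence $y$ has no neighbors outside $V(T)$. Counting, the non-$R$ colors at $y$ number at most $|W_i|-1$. Each $R$-color at $y$ is a star centered at a neighbor $s\in S$ with $ys$ of that color, and distinct neighbors give distinct edges. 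So the $R$-colors number at most the number of neighbors of $y$ in $S\setminus V(W_i)$ plus the $S$-neighbors inside $W_i$, where the latter were already counted among the $|W_i|-1$ neighbors. Therefore
\[
d^c_G(y)\le |W_i|-1+|S\setminus V(W_i)|.
\]

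\textbf{Step 3: summing.} Adding the bounds for $y_1$ and $y_2$ gives
\[
d^c_G(y_1)+d^c_G(y_2)\le |W_1|+|W_2|-2+|S|.
\]
Since $|S|=2+\beta\le r$ and $|W_1|+|W_2|=|V(T)|-r+1$, the sum is at most $|V(T)|-1=2\delta^c(G)-1$. This contradicts $d^c_G(y_1)+d^c_G(y_2)\ge 2\delta^c(G)$.

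The main obstacle is Step 2, namely avoiding double counting. An $S$-neighbor $s$ inside $W_i$ contributes the single edge $ys$, which carries exactly one color, so each neighbor contributes at most one color in total. The clean bound is therefore $d^c_G(y)\le|N_G(y)|$, with $N_G(y)\subseteq (V(W_i)-y)\cup(S\setminus V(W_i))$, using Claim~\ref{cl:rbmt} to rule out non-$S$ neighbors in other components. I expect this to go through directly. If it fails, I would fall back on exploiting $|C(x)|=1$ from Claim~\ref{cl:adjacency-tight} when $|V(W_i)\cap S|\ge2$.
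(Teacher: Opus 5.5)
Your proof is correct and follows the paper's argument. You pick $y_1\in V(W_1)\setminus S$ and $y_2\in V(W_2)\setminus S$, bound $d^c_G(y_i)\le |W_i|-1+|S\setminus V(W_i)|$, and use $|W_3|=\cdots=|W_{r+1}|=1$ together with $|S|\le r$ to get $d^c_G(y_1)+d^c_G(y_2)\le |V(T)|-1=2\delta^c(G)-1$. Your Step~2 supplies the justification, via Claims~\ref{cl:rbmt} and~\ref{cl:center}, for the neighbor bound that the paper only asserts; you should also state explicitly that $|S\setminus V(W_1)|+|S\setminus V(W_2)|=|S|$, which holds because $S\subseteq V(W_1)\cup V(W_2)$ when $m=2$.
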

\begin{proof}
Suppose for a contradiction that there exist vertices $w_1\in V(W_1)\setminus S$ and $w_2\in V(W_2)\setminus S$.
Then
\[
d_G^c(w_i)\le |W_i|-1+|N_G(w_i)\cap(S\setminus V(W_i))|
\]
for $i=1,2$. Using \eqref{eq:m2-singletons}, we get
\[
d_G^c(w_1)+d_G^c(w_2)
\le (|W_1|+|W_2|-2)+r
= \bigl(|W_1|+|W_2|+(r-1)\bigr)-1
= |V(T)|-1
=2\delta^c(G)-1,
\]
contradicting $d_G^c(w_1)+d_G^c(w_2)\ge 2\delta^c(G)$.
\end{proof}

By symmetry, we may assume that $V(W_1)\subseteq S$.

\begin{cl}\label{cl:m2-if-big-intersection-then-other-small}
If $|V(W_i)\cap S|\ge2$ for some $i\in\{1,2\}$, then $|V(W_{3-i})\cap S|=1$.
\end{cl}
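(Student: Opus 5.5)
We argue by contradiction: suppose $|V(W_1)\cap S|\ge 2$ and $|V(W_2)\cap S|\ge 2$, and aim to violate the tightness of \eqref{eq:deg-bound-single}. By Claim~\ref{cl:adjacency-tight}, applied to both $W_1$ and $W_2$, every $x\in (V(W_1)\cup V(W_2))\cap S$ satisfies $|C(x)|=1$. Moreover, for every $x\in V(W_i)\cap S$ the edge from $x$ to the representative of the other component has color in $C(x)$. The representative $x_i$ was chosen arbitrarily in $V(W_i)\cap S$, so this holds for every choice.

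First choose $x_1\in V(W_1)\cap S$ and $x_2\in V(W_2)\cap S$, and use the last statement of Claim~\ref{cl:adjacency-tight} in both directions. With $i=1$ and $j=2$ it gives $c(x_1x_2)\in C(x_1)$. With $i=2$ and $j=1$ it gives $c(x_1x_2)\in C(x_2)$. Thus the single edge $x_1x_2$ carries a color belonging to both $x_1$ and $x_2$.

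Each color of $R$ belongs to exactly one center, since every monochromatic star has a unique (chosen) center. Hence $C(x_1)\cap C(x_2)=\emptyset$ for $x_1\neq x_2$. Claim~\ref{cl:adjacency-tight} also guarantees that $x_1x_2$ is an edge, because $x_2\in S\setminus\{x_1\}$, taking the representatives appropriately. Its color would therefore have to lie in two disjoint sets, which is the contradiction.

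The main obstacle is making sure that the last part of Claim~\ref{cl:adjacency-tight} really applies symmetrically. It is stated for the fixed representatives $x_j$, and its hypothesis $|V(W_i)\cap S|\ge 2$ must hold for the index $i$ being used. If this cross-application is delicate, I would fall back on equality in \eqref{eq:deg-bound-single}. Since $m=2$, the term $|\{c(x_ix_j)\colon c(x_ix_j)\notin C(x_i)\}|$ for $i=1,2$ sums to $|E(G[\{x_1,x_2\}])|=1$ under equality in \eqref{eq:deg-bound-sum}. So exactly one of $x_1,x_2$ sees $c(x_1x_2)$ outside its own color set. This is incompatible with both endpoints having $c(x_1x_2)\in C(x_\cdot)$, and also with both not having it.
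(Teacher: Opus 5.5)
Your proof is correct and is essentially the paper's argument: both derive the contradiction from Claim~\ref{cl:adjacency-tight} by forcing the color of an edge between $V(W_1)\cap S$ and $V(W_2)\cap S$ to have inconsistent ownership. The paper picks a non-representative $y\in V(W_{3-i})\cap S$, gets $c(x_iy)\notin C(x_i)$ from the first part of that claim, and then gets $c(x_iy)\in C(x_i)$ from its last part after re-choosing $y$ as the representative. You instead apply the last part from both sides, which is legitimate since under the contradiction hypothesis both $|V(W_1)\cap S|\ge 2$ and $|V(W_2)\cap S|\ge 2$, and then use that the sets $C(\cdot)$ are pairwise disjoint; this is a slightly more symmetric packaging of the same idea.
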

\begin{proof}
Assume $|V(W_i)\cap S|\ge2$ and $|V(W_{3-i})\cap S|\ge2$.
Choose $y\in (V(W_{3-i})\cap S)\setminus\{x_{3-i}\}$.
By Claim~\ref{cl:adjacency-tight}, $x_i y\in E(G)$ and $c(x_i y)\notin C(x_i)$.
Since $x_i$ and $y$ lie in different components of $G[V(T)]-\bigcup_{c\in R}E_c$, the edge $x_i y$ has a reachable color, and hence it belongs to one of its endpoints.
Since $c(x_i y)\notin C(x_i)$, this color belongs to $y$.

The vertices $x_{3-i}$ and $y$ play symmetric roles in $W_{3-i}$.
Therefore, we may replace $x_{3-i}$ by $y$.
By Claim~\ref{cl:adjacency-tight} and $|V(W_i) \cap S| \ge 2$, we have that $c(x_i x_{3-i})$ belongs to $x_i$, a contradiction.
\end{proof}

\begin{cl}\label{cl:m2-W1-singleton}
$V(W_1)=\{x_1\}$.
\end{cl}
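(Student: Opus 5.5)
We argue by contradiction: assume $|V(W_1)|\ge 2$. Since $V(W_1)\subseteq S$, this gives $|V(W_1)\cap S|\ge 2$. Two consequences follow. First, Claim~\ref{cl:m2-if-big-intersection-then-other-small} yields $V(W_2)\cap S=\{x_2\}$, so $\beta_2=0$ and $\beta=\beta_1=|W_1|-1\ge 1$. Second, Claim~\ref{cl:adjacency-tight} applies inside $W_1$: every $x\in V(W_1)$ has $|C(x)|=1$, and the color of $xx_2$ lies in $C(x)$. Thus every edge from $W_1$ to $x_2$ carries the unique color belonging to its $W_1$-endpoint.

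Next I would extract the exact degree of $x_2$ from the equality in \eqref{eq:deg-bound-single} for $i=2$. This equality reads
\[
\delta^c(G)=|C(x_2)|+(|W_2|-1)+\bigl|\{c(x_2x_1)\}\setminus C(x_2)\bigr|+\beta_1 .
\]
The color of $x_2x_1$ belongs to $x_1$, so it is not in $C(x_2)$, and the third term equals $1$. The equality case of \eqref{eq:deg-bound-single} for $i=1$ then gives a second expression for $\delta^c(G)$. The tightness conditions behind \eqref{eq:deg-bound-sum}--\eqref{eq:deg-bound-final} give $\sum_i|C(x_i)|=r-\beta$ and $|W_1|+|W_2|=|V(T)|-r+1$. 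Combining these with $|V(T)|=2\delta^c(G)$ and $|W_1|=1+\beta$ should pin down $|W_2|$ and $|C(x_2)|$ in terms of $\beta$ and $r$.

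The goal is to reach a numerical contradiction. There are two candidate routes. The first is to find a vertex $w\in V(W_2)\setminus S$ (if any) and bound $d^c_G(w)+d^c_G(x')$ for some $x'\in W_1$, as in Claim~\ref{cl:m2-one-component-inside-S}. The second, if $V(W_2)=\{x_2\}$, is to use the symmetry between $x_1$ and another $x_1'\in W_1$ (as in Claim~\ref{cl:Wi-singleton}) to get two incompatible values for the count in \eqref{eq:deg-bound-single}. Concretely, for $x_1'\in W_1\setminus\{x_1\}$ the edge $x_1'x_2$ carries a color in $C(x_1')$. Hence, from $x_2$'s perspective, the colors not in $C(x_2)$ toward $S\setminus W_2$ number exactly $|W_1|=1+\beta$. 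Recomputing $d^c(x_2)$ with the count $1+\beta_1$ must match $\delta^c(G)$. Meanwhile $d^c(x_1)=|C(x_1)|+\beta+t$, where $t\in\{0\}$ because $c(x_1x_2)\in C(x_1)$. Equating the two degrees with $|V(T)|=2\delta^c(G)=|W_1|+|W_2|+r-1$ should force $\beta=0$.

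The hard step is this bookkeeping. The bound \eqref{eq:deg-bound-single} was stated using the \emph{representative} $x_j$ and the $\beta_j$ of the other component, and I expect it is exactly double counting $x_1$ among the $1+\beta_1$ vertices that yields the slack. If the pure counting leaves a gap, the fallback is constructive. Claim~\ref{cl:adjacency-tight} says $x_2$ sees every vertex of $W_1$ in the distinct colors of those vertices. One can then build a rainbow tree on $V(T)$ plus an outside neighbor of $x_2$, of order $2\delta^c(G)+1$. Such a tree would be found by the one- or two-edge exchange steps of Algorithm~\ref{algo:rainbow-tree}, contradicting termination (as in Claim~\ref{cl:indep}).
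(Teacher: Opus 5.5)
Your setup is exactly the paper's. From $V(W_1)\subseteq S$ and $|W_1|\ge 2$ you use Claim~\ref{cl:m2-if-big-intersection-then-other-small} to get $\beta_2=0$ and $\beta_1=|W_1|-1$. You use Claim~\ref{cl:adjacency-tight} to get $|C(x_1)|=1$ and $c(x_1x_2)\in C(x_1)$. Then you read off the equality cases of \eqref{eq:deg-bound-single} for $i=1,2$.

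The gap is that you never draw the conclusion. You say that equating the two degrees ``should force $\beta=0$'', call this the hard step, suspect hidden double counting, and propose case splits and a fallback. None of this is carried out, and the predicted mechanism is not what happens. The two formulas you already wrote finish the proof immediately, with no case split on $V(W_2)$ and no use of $|V(T)|=|W_1|+|W_2|+r-1$:
\[
d^c_G(x_1)=|C(x_1)|+(|W_1|-1)+0+\beta_2=|W_1|,
\]
\[
d^c_G(x_2)=|C(x_2)|+(|W_2|-1)+1+\beta_1=|C(x_2)|+|W_2|+|W_1|-1\ge |W_1|+1,
\]
which contradicts $d^c_G(x_1)=d^c_G(x_2)=\delta^c(G)$. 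In your notation, $d^c_G(x_1)=1+\beta$ and $d^c_G(x_2)=|C(x_2)|+|W_2|+\beta$. So $\beta$ cancels, and the contradiction is $|C(x_2)|+|W_2|\ge 2$ (since $x_2\in S\cap V(W_2)$), not $\beta=0$. There is also no double counting. The vertex $x_1$ is counted once, in the third term of \eqref{eq:deg-bound-single} for $x_2$, and the remaining $\beta_1$ vertices of $W_1$ are counted in the fourth term. This agrees with your own observation that $x_2$ sees exactly $|W_1|$ distinct colors outside $C(x_2)$ toward $W_1$.

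The constructive fallback would not have rescued the argument. By Claim~\ref{cl:rbmt}, $G[V(T)]-\bigcup_{c\in R}E_c$ has $r+1$ components, and every edge of $\partial(V(T))$ has a color in $R$. Hence any tree on $V(T)\cup\{v\}$ with $v\notin V(T)$ uses at least $r+1$ edges with colors in $R$, so it cannot be rainbow. The tree your fallback would build (on $V(T)$ plus an outside neighbor of $x_2$) is therefore already excluded at termination of Algorithm~\ref{algo:rainbow-tree}. Exhibiting it would require first finding the very contradiction you are after. The exchange arguments in Claims~\ref{cl:indep} and~\ref{cl:m2-outside-independent} work only because they drop a vertex of $V(T)$ and add two outside vertices.
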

\begin{proof}
Suppose $|W_1|\ge2$. Then $|V(W_1)\cap S|\ge2$, and Claim~\ref{cl:m2-if-big-intersection-then-other-small} yields
$|V(W_2)\cap S|=1$, i.e.\ $V(W_2)\cap S=\{x_2\}$ and hence $\beta_2=0$ and $\beta_1=|W_1|-1$.
And Claim~\ref{cl:adjacency-tight} gives $|C(x_1)|=1$ and $c(x_1x_2)\in C(x_1)$.
Therefore, equality in \eqref{eq:deg-bound-single} for $x_1$ yields
\[
d_G^c(x_1)=|C(x_1)|+|W_1|-1=|W_1|.
\]
For $x_2$, the same equality yields
\[
d_G^c(x_2)=|C(x_2)|+|W_2|-1+|W_1| \ge 1+0+|W_1|=|W_1|+1,
\]
contradicting $d_G^c(x_1)=d_G^c(x_2)=\delta^c(G)$.
\end{proof}

\begin{subcase}
    $V(W_2)\subseteq S$.
\end{subcase}

Then the same argument as above implies that $V(W_2)=\{x_2\}$, hence $S=\{x_1,x_2\}$.
Let $w$ be the unique vertex of $W_3$.
Then $w$ has no neighbors in $V(G)\setminus V(T)$, and $N_G(w)\subseteq \{x_1,x_2\}$, so $d_G^c(w)\le2$.
By $|V(T)|=2\delta^c(G)\ge3$, we have $\delta^c(G)\ge2$, so $\delta^c(G)=2$ and $|V(T)|=4$.
Therefore $r+1=4$ and $W_3,W_4$ are singletons, each of them is adjacent to both $x_1$ and $x_2$.
This is exactly the structure of the family $\cG^2$.

\begin{subcase}
    $V(W_2)\not\subseteq S$.
\end{subcase}
Choose $w_2\in V(W_2)\setminus S$.

\begin{cl}\label{cl:m2-G4-structure}
If $V(W_2)\not\subseteq S$, then the following statements hold.
\begin{enumerate}[label=(\alph*)]\itemsep0em
\item For any $i$ and $j$ with $1 \le i < j \le r$, the colors $c_i$ and $c_j$ belong to distinct vertices of $S$. \label{it:a}
\item $V(W_2)\setminus S=\{w_2\}$.\label{it:b}
\item $W_2$ is a rainbow clique, and $E(W_2)$ uses no color from $R$.\label{it:c}
\item For every $x\in V(W_2)\cap S$, the edge $xx_1$ has a color in $C(x)$.\label{it:d}
\item Every vertex in $W_3\cup\cdots\cup W_{r+1}$ is adjacent to every vertex in $S$.\label{it:e}
\item $x_1w_2\in E(G)$ and $c(x_1w_2)\in C(x_1)$.\label{it:f}
\end{enumerate}
\end{cl}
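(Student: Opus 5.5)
We work in the setting $m=2$, $V(W_1)=\{x_1\}$ (Claim~\ref{cl:m2-W1-singleton}), $|W_3|=\cdots=|W_{r+1}|=1$ by \eqref{eq:m2-singletons}, and $w_2\in V(W_2)\setminus S$. Every item will be read off from equality in one of the counting inequalities, applied to $x_1$, to $x_2$, to $w_2$, or to a pair of vertices. The main tool is the pair-sum argument from Claim~\ref{cl:m2-one-component-inside-S}: for any vertex $w\notin S$ we have
\[
d_G^c(w)\le |W(w)|-1+|N_G(w)\cap(S\setminus V(W(w)))|,
\]
where $W(w)$ denotes the component containing $w$. Colors of edges leaving $V(T)$ lie in $R$ and belong to their centers, which lie in $S$. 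Hence $w$ sees colors only on edges inside its component or on edges to $S$. Moreover, an edge from $w$ to a vertex outside $W(w)$ has a reachable color centered at the other endpoint, which must therefore lie in $S$.

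First we pair $w_2$ with $x_1$ and add the two bounds. Since $V(W_1)=\{x_1\}$, equality \eqref{eq:deg-bound-single} gives $d_G^c(x_1)=|C(x_1)|+|c(x_1\text{ to } S\setminus\{x_1\})\setminus C(x_1)|$. Summing this with the bound for $w_2$, and using $|V(T)|=1+|W_2|+(r-1)=2\delta^c$, we get
\[
d^c(x_1)+d^c(w_2)\le |C(x_1)|+(|S|-1)+|W_2|-1+|S|-1 .
\]
Since $|S|\le r$, with equality iff each vertex of $S$ carries exactly one reachable color, forcing $\ge 2\delta^c$ makes every intermediate inequality tight.

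From this tightness we read off the items in order. Equality $|S|=r$ together with $|C(x_1)|\ge1$ leaves slack only if some vertex supports two colors, so each $c_i$ belongs to a distinct vertex, which is (a). Tightness of $|W_2|-1$ means $w_2$ is adjacent to all of $W_2$ with distinct colors. Tightness of $|S|-1$ means $w_2$ is adjacent to all of $S\setminus V(W_2)$; for the vertex $x_1$ this gives (f), since the color of $x_1w_2$ is reachable and centered in $S$, namely at $x_1$.

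For (b), suppose $w_2'\ne w_2$ is another vertex of $V(W_2)\setminus S$. We pair $w_2'$ with $x_1$, or with a vertex $z\in W_3$. The latter satisfies $d^c(z)\le|S|$, since $z$ is a singleton whose neighbors all lie in $S$. The component $W_2$ must then contain both $w_2,w_2'$ at full degree, and recounting $|V(T)|$ leaves a deficit of one, a contradiction. I would attempt this by summing $d^c(w_2)+d^c(z)$ with $z\in W_3$: this gives $\le |W_2|-1+|S\setminus W_2|+|S|$. I would then compare the result with $2\delta^c=|W_2|+r$, obtaining $|S\cap W_2|\le\ldots$, which should pin down $|V(W_2)\setminus S|=1$.

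Item (e) follows from the same pair argument with $z$: tightness forces $N(z)=S$ with distinct colors. For (c), the pairwise-distinct colors at $w_2$ inside $W_2$ follow from the above. For the other pairs inside $W_2$, apply Claim~\ref{cl:adjacency-tight} with each $x\in V(W_2)\cap S$ in the role of $x_2$. This is legitimate because the choice of representative is arbitrary. It gives $x$ adjacent to all of $V(W_2)$ with distinct colors not in $C(x)$. These colors are not in $R$ at all, since $R$-colors inside $W_2$ would have been deleted. This yields the rainbow clique. Item (d) is the second part of Claim~\ref{cl:adjacency-tight} when $|V(W_2)\cap S|\ge2$. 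When $V(W_2)\cap S=\{x_2\}$, I would check it directly: $c(x_1x_2)\notin C(x_1)$, by comparing degrees of $x_1$ and $x_2$ as in Claim~\ref{cl:m2-W1-singleton}.

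The main obstacle I expect is (b) together with the global rainbow claim in (c). Getting the exact pair and the right bookkeeping of $|S\cap V(W_2)|$ versus $\beta_2$ so that equality propagates is delicate. I would first try the pairing $w_2$ with $x_1$, then fall back to pairing two non-$S$ vertices $w_2,w_2'$ with the singleton components.
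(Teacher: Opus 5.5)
Your pair-sum strategy is workable, but the key inequality is miscomputed and, as written, forces nothing. For $w_2$ you bound $|N_G(w_2)\cap(S\setminus V(W_2))|$ by $|S|-1$. However, $V(W_1)=\{x_1\}$ and $S\subseteq V(W_1)\cup V(W_2)$, so $S\setminus V(W_2)=\{x_1\}$ and this term is at most $1$. With your $|S|-1$, the inequality $|W_2|+r\le |C(x_1)|+2|S|+|W_2|-3$, combined with $|C(x_1)|\le r-|S|+1$, has slack $|S|-2$. This slack is positive as soon as $|S|\ge 3$, so tightness is not forced. In particular nothing forces $|S|=r$, and your derivations of (a) and (f) collapse.

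Even with the term corrected to $1$, the pair $(x_1,w_2)$ is tight but yields only $\delta^c(G)=r=|W_2|$, $|C(x_1)|=r-|S|+1$, and $|C(x)|=1$ for $x\in V(W_2)\cap S$. By itself it does not exclude $x_1$ carrying several reachable colors. Your treatment of (b) also stops at ``should pin down''. The missing ingredient in both places is a vertex $z$ of the singleton component $W_3$, which exists since $r\ge |S|\ge 2$. Every edge at $z$ has a reachable color centered at its other endpoint, so $d^c_G(z)\le |S|$.

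Pairing $w_2$ with $z$, as you started to do, settles (a), (b), (e) and (f) at once:
\[
|W_2|+r=|V(T)|=2\delta^c(G)\le d^c_G(w_2)+d^c_G(z)\le |W_2|+|S|\le |W_2|+r .
\]
The chain is tight, so $|S|=r$, which is (a). Next, $d^c_G(z)=|S|$ for each singleton $z$, which is (e). Also $d^c_G(w_2)=|W_2|$, so $w_2$ sees the rest of $W_2$ in distinct colors and is adjacent to $x_1$. The color of $x_1w_2$ is reachable and, since $w_2\notin S$, centered at $x_1$, which is (f). Finally, both summands are at least $\delta^c(G)$, so both equal it. Hence $|W_2|=r$ and $|V(W_2)\setminus S|=r-(r-1)=1$, which is (b).

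Your arguments for (c) and (d) then go through, with one correction. The remark ``$R$-colors inside $W_2$ would have been deleted'' does not show that an edge of $G$ joining two vertices of $W_2$ avoids $R$. Argue instead that such a color would lie in $C(\cdot)$ of an endpoint, which Claim~\ref{cl:adjacency-tight} and $w_2\notin S$ rule out.

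Repaired this way, your single tight chain is a more uniform route than the paper's. The paper proves (a) by showing $|V(T)|\ge 2|S|+1$ against $\delta^c(G)\le |S|$, and (b) by showing $d^c_G(x)\ge |W_2|\ge r+1$ for $x\in V(W_2)\cap S$.
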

\begin{proof}
\ref{it:a} If some $v\in S$ is the center of two reachable colors, then $r\ge |S|+1$.
By \eqref{eq:m2-singletons}, $|W_3\cup\cdots\cup W_{r+1}|=r-1\ge |S|$, and by $W_2\not\subseteq S$, we have $|W_1\cup W_2|\ge |S|+1$.
Hence $|V(T)|\ge 2|S|+1$.
Since $d_G^c(w)\le |S|$ for $w\in V(W_3)$, we have $\delta^c(G)\le |S|$, contradicting $|V(T)|=2\delta^c(G)$.

\ref{it:b} If $|V(W_2)\setminus S|\ge2$, then (a) implies $|W_2|\ge (r-1)+2=r+1$.
Pick $x\in V(W_2)\cap S$ and use Claim~\ref{cl:adjacency-tight} with $x$ chosen as the representative in $V(W_2)\cap S$, we get $d_G^c(x)\ge |C(x)|+|W_2|-1=|W_2|\ge r+1$.
But every vertex in $W_3$ has color degree at most $|S|=r$, so $\delta^c(G)\le r$, contradiction.

\ref{it:c} Since we may choose any vertex of $V(W_2)\cap S$ as $x_2$, Claim~\ref{cl:adjacency-tight} implies that $G[V(W_2)\cap S]$ is a rainbow clique.
By the definition of the components $W_i$, $E(W_2)$ uses no color from $R$.
Finally, if there are two edges of the same color incident to $w_2$, then $d_G^c(w_2)\le |W_2|-1-1+1<\delta^c(G)$, a contradiction.
Therefore, $W_2$ is a rainbow clique.

\ref{it:d} If $|V(W_2)\cap S|\ge2$, it holds by Claim~\ref{cl:adjacency-tight}.
If $V(W_2) \cap S = \{x_2\}$, then by \ref{it:a}, $r = |S| = 2$, which implies $\delta^c(G) = 2$. 
If $c(x_1x_2) \in C(x_1)$, then $x_2$ has at least three distinct colors on its incident edges: $c(x_1x_2)$, the unique color in $C(x_2)$, and $c(x_2w_2)$.
This contradicts $d_G^c(x_2) = 2$.

\ref{it:e} By \ref{it:a} and \ref{it:b}, $|W_2|=(r-1)+1=r$, so $|V(T)|=|W_1|+|W_2|+(r-1)=1+r+(r-1)=2r$, hence $\delta^c(G)=r$.
For every vertex $w\in V(W_3)\cup\cdots\cup V(W_{r+1})$, we have $d_G^c(w)\le |S|=r$, and hence $w$ is adjacent to all of $S$.

\ref{it:f} By \ref{it:c} we have that $w_2$ is incident inside $W_2$ with $|W_2|-1=r-1$ distinct colors.
Since $\delta^c(G)=r$, we have $x_1w_2\in E(G)$ and $c(x_1w_2)\in C(x_1)$.
\end{proof}

\begin{cl}\label{cl:m2-outside-independent}
The set $V(G)\setminus V(T)$ is independent.
\end{cl}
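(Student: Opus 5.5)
We argue by contradiction, mirroring Claim~\ref{cl:indep} from the case $m\ge 3$. Suppose $V(G)\setminus V(T)$ is not independent. Since $G$ is connected and the vertices of $T$ outside $S$ have no neighbours in $V(G)\setminus V(T)$, we can choose adjacent $x,y\in V(G)\setminus V(T)$ with $x$ adjacent to some $s\in V(T)$. By Claim~\ref{cl:rbmt}, $c(xs)\in R$. Every monochromatic subgraph is a star, so $c(xs)$ belongs to $s$, and hence $s\in S$. Moreover, $xy$ is an edge with both ends outside $V(T)$, so it is not an edge of the star of any colour in $R$, since those stars have their centres in $S\subseteq V(T)$ by Claim~\ref{cl:center}. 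Thus $c(xy)\notin R$ and $c(xy)\neq c(xs)$. We will build a rainbow tree $T'$ on $V(T)-v$ for some $v\neq s$, with $|V(T')|=|V(T)|-1$ and avoiding the colours $c(xs)$ and $c(xy)$. Then $T'+sx+xy$ is exactly the tree produced by a two-edge exchange, contradicting termination of Algorithm~\ref{algo:rainbow-tree}.

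The construction uses the structure from Claim~\ref{cl:m2-G4-structure}. We have $V(W_1)=\{x_1\}$ and $V(W_2)=(V(W_2)\cap S)\cup\{w_2\}$, which is a rainbow clique with no colour from $R$. Each vertex of $S$ owns exactly one colour of $R$, so $r=|S|$, and the singletons $W_3,\dots,W_{r+1}$ are adjacent to all of $S$. Each edge from such a singleton $z$ to $s'\in S$ carries the unique colour of $C(s')$, because $z\notin S$ and the star is centred at $s'$. Let $\gamma_{s'}$ denote the colour owned by $s'$. The tree $T'$ is then assembled as follows.
\begin{enumerate}
\item Take a spanning star of $W_2$ centred at $w_2$. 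This is rainbow and uses no $R$-colour.
\item Attach $x_1$ via the edge $x_1w_2$, which has colour $\gamma_{x_1}$ by \ref{it:f}.
\item Match each of the $r-1$ singletons $W_3,\dots,W_{r+1}$ to a distinct vertex $s'\in V(W_2)\cap S$, using an edge of colour $\gamma_{s'}$.
\end{enumerate}
This matching uses the $r-1$ colours owned by $V(W_2)\cap S$, while step~2 uses the colour owned by $x_1$. To omit the colour $\gamma_s=c(xs)$ and drop one vertex, delete the piece carrying $\gamma_s$. If $s\in V(W_2)\cap S$, drop the singleton matched to $s$. If $s=x_1$, drop $x_1$ itself, and then $v=x_1=s$ is not allowed. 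In that case we instead attach $x_1$ through an edge $x_1s''$ with $s''\in V(W_2)\cap S$; by \ref{it:d} this edge has colour $\gamma_{s''}$, so we drop the singleton matched to $s''$. Since $c(xy)\notin R$, we must also ensure that $c(xy)$ is not one of the clique colours used in $W_2$. As a star colour centred at $x$ or $y$, it can appear in $G[V(T)]$ only on edges incident to $x$ or $y$, which lie outside $V(T)$. Hence it cannot occur inside $T'$.

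The main obstacle is the bookkeeping in the case $s=x_1$: we need an edge from $x_1$ into $W_2$ whose colour differs from $\gamma_{x_1}$, while keeping $x_1$ in the tree as the attachment point. If $V(W_2)\cap S=\{x_2\}$ only, then $r=2$ and $\delta^c=2$. Here \ref{it:d} gives $c(x_1x_2)=\gamma_{x_2}$, so the tree $x_1x_2,\ x_2w_2$ together with the edges $sx, xy$ (where $s=x_1$) has order $5=2\delta^c+1$. This is a direct check and completes the contradiction.
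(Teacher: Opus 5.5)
Your proof is correct and follows essentially the same route as the paper. Both build a rainbow tree on $V(T)$ minus one vertex $v\neq s$ that avoids $c(xs)$ and $c(xy)$, then append $sx$ and $xy$, which contradicts termination of the two-edge exchange. You differ only in which vertex you drop: the paper always drops the singleton $W_{r+1}$ and uses a spanning tree of $G[V(W_1)\cup V(W_2)]$ containing exactly one $R$-colour $c_i\neq c(xz)$. Your final special case $V(W_2)\cap S=\{x_2\}$ is already covered by your general construction.
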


\begin{proof}
Assume for a contradiction that $V(G)\setminus V(T)$ is not independent.
Since $G$ is connected, there exist adjacent vertices $x,y\in V(G)\setminus V(T)$ such that
$x$ is adjacent to some vertex $z\in V(T)$.
By Claim~\ref{cl:rbmt}, the color of the edge $xz$ lies in $R$, and $z\in S$.
Since every monochromatic subgraph of $G$ is a star, we have $c(xy)\neq c(xz)$ and $c(xy)\notin R$.

Let $w$ be the unique vertex of $W_{r+1}$.
By the previous part of the proof, $W_3,\dots,W_{r+1}$ are singleton components.
Moreover, every vertex of $W_3\cup\cdots\cup W_{r+1}$ is adjacent to every vertex of $S$, and the colors on these edges are exactly the colors in $R$.
Since $|V(T)|=2\delta^c(G)=2r$ and $W_2\not\subseteq S$, we have $r\ge 2$.
Choose a color $c_i\in R\setminus \{c(xz)\}$.

By Claim~\ref{cl:m2-G4-structure}, the graph $G[V(W_1)\cup V(W_2)]$ has a rainbow spanning tree
$T_i$ such that $c_i$ is the only color from $R$ appearing in $T_i$.
For each color in $R\setminus\{c_i,c(xz)\}$, choose one edge of this color joining its center in $S$ to a distinct vertex among $W_3,\dots,W_r$.
Together with $T_i$, these edges form a rainbow tree $T^\ast$ on $V(T)\setminus\{w\}$.

Since $c(xy)\notin R$ and every monochromatic subgraph is a star, the color $c(xy)$ does not appear on any edge of $T^\ast$.
Hence
\[
T^\ast + xz + xy
\]
is a rainbow tree on the vertex set $(V(T)\setminus\{w\})\cup\{x,y\}$.
This contradicts Steps~\ref{step:exchange-one-edge}--\ref{step:exchange-two-edge} of Algorithm~\ref{algo:rainbow-tree}.
\end{proof}

Since $\delta^c(G)=|S|=r$, every vertex in $V(G)\setminus V(T)$ is adjacent to all vertices of $S$.
It is easy to see that the set $V(G)\setminus V(T)$ is independent.
Then together with Claim~\ref{cl:m2-outside-independent}, this is exactly the definition of $\cG^4$.
This completes the proof for case $m=2$.

\subsection{Case \texorpdfstring{$m=1$}{m=1}}
\label{app:m1}

Assume that $m=1$. Then $S\subseteq V(W_1)$, and $W_2,\dots,W_{r+1}$ are disjoint from $S$.

\begin{cl}\label{cl:m1-W1-equals-S}
$V(W_1)=S$.
\end{cl}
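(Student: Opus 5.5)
We need to show $V(W_1)=S$ when $m=1$, under $|V(T)|=2\delta^c(G)$ and $V(T)\ne V(G)$. Since $m=1$, all of $S$ lies in $W_1$, so $S\subseteq V(W_1)$ and it remains to rule out a vertex $w_1\in V(W_1)\setminus S$. The plan is a color-degree counting argument modelled on the end of the proof of Claim~\ref{cl:2delta} and on Claim~\ref{cl:m2-one-component-inside-S}: pair $w_1$ with a vertex $w_2$ of another component and show $d_G^c(w_1)+d_G^c(w_2)\le |V(T)|-1=2\delta^c(G)-1$, which contradicts the definition of $\delta^c(G)$.

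\textbf{Step 1: choice of $w_2$.} Take $w_2\in V(W_2)$; such a component exists because $r\ge 1$ (indeed $r\ge |S|\ge 1$). Since $m=1$, we have $V(W_2)\cap S=\emptyset$, so $w_2\notin S$.

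\textbf{Step 2: bound the color degrees of $w_1$ and $w_2$.} For $w\in\{w_1,w_2\}$, every edge from $w$ to $V(G)\setminus V(T)$ has a color in $R$ by Claim~\ref{cl:rbmt}. Because $w\notin S$, such a color cannot belong to $w$, so the edge must be a leaf edge of a star centered in $S$; but $S\subseteq V(T)$ by Claim~\ref{cl:center}, which is impossible. Hence $w$ has no neighbors outside $V(T)$. The same argument applies inside $V(T)$: an edge from $w$ leaving its own component $W_j$ has a reachable color whose center lies in $S$, so its other endpoint is in $S$. Therefore:
\begin{itemize}
\item $d_G^c(w_1)\le |W_1|-1$, since all neighbors of $w_1$ lie in $W_1$ (the other endpoint would have to be in $S\subseteq W_1$);
\item $d_G^c(w_2)\le |W_2|-1+|S|$.
\end{itemize}

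\textbf{Step 3: combine the bounds.} Adding these and using $|S|\le r$ together with $|W_3|+\cdots+|W_{r+1}|\ge r-1$ gives
\[
d_G^c(w_1)+d_G^c(w_2)\le |W_1|+|W_2|+r-2\le |V(T)|-1=2\delta^c(G)-1,
\]
which is the desired contradiction.

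\textbf{Main obstacle.} The delicate step is Step 2: justifying that every edge from a non-center vertex to another component, or to outside $V(T)$, ends at a center in $S$. This relies on the star-colored structure and on Claim~\ref{cl:rbmt}, and it needs care with the $K_2$ stars, whose centers were chosen arbitrarily. An edge of a $K_2$-star of color $c\in R$ has only its designated center in $S$; if $w$ is the non-designated endpoint, the other endpoint is the center, so the argument still goes through.
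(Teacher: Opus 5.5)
Your proposal is correct and follows the paper's proof essentially verbatim: take $u\in V(W_1)\setminus S$ and $v\in V(W_2)$, bound $d_G^c(u)\le |W_1|-1$ and $d_G^c(v)\le |W_2|-1+r$, and use $|W_3|+\cdots+|W_{r+1}|\ge r-1$ to get $d_G^c(u)+d_G^c(v)\le |V(T)|-1=2\delta^c(G)-1$. Your Step~2 also spells out why these two color-degree bounds hold, using Claims~\ref{cl:rbmt} and~\ref{cl:center} and the star structure, which the paper states without justification.
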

\begin{proof}
Suppose that there exists $u\in V(W_1)\setminus S$, and choose any $v\in V(W_2)$.
Since $u\notin S$, we have $d_G^c(u)\le |W_1|-1$.
For $v\in W_2$, we have $d_G^c(v)\le |W_2|-1+r$.
Thus
\[
d_G^c(u)+d_G^c(v)\le (|W_1|-1)+(|W_2|-1+r)\le |V(T)|-1=2\delta^c(G)-1,
\]
contradicting $d_G^c(u)+d_G^c(v)\ge 2\delta^c(G)$.
\end{proof}

\begin{cl}\label{cl:m1-structure}
The following statements hold.
\begin{enumerate}[label=(\alph*)]\itemsep0em
\item $|W_2|=\cdots=|W_{r+1}|=1$. \label{it:aa}
\item Every vertex in $V(T)\setminus V(W_1)$ is adjacent to every vertex of $W_1$.\label{it:bb}
\item $d_G^c(x)=\delta^c(G)$ for every $x\in V(T)$.\label{it:cc}
\item For any $i$ and $j$ with $1 \le i < j \le r$, the colors $c_i$ and $c_j$ belong to distinct vertices of $S$.\label{it:dd}
\item $W_1$ is a rainbow clique and uses no color from $R$.\label{it:ee}
\end{enumerate}
\end{cl}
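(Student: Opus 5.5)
The plan is to derive all five items from one degree-sum inequality, played against the bound $|V(T)|=2\delta^c(G)$. We use the standing assumptions $m=1$, $|V(T)|=2\delta^c(G)$, $V(T)\neq V(G)$, together with Claim~\ref{cl:m1-W1-equals-S}, which gives $V(W_1)=S$. The components $W_2,\dots,W_{r+1}$ are disjoint from $S$.

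\textbf{Basic bound for vertices outside $W_1$.} Let $v\in V(W_j)$ with $j\ge2$. Since $v\notin S$, no color of $R$ belongs to $v$. Every edge of $v$ leaving $W_j$ has a reachable color (Claim~\ref{cl:rbmt}), and that color is centered at the other endpoint, which must lie in $S=V(W_1)$. The same holds for edges to $V(G)\setminus V(T)$: they have reachable colors centered in $S$, but centers lie in $V(T)$ and $v\notin S$, so $v$ has no such neighbors. Hence
\[ d_G^c(v)\le (|W_j|-1)+|S|. \]
For $u\in S$, the colors at $u$ are either colors of $C(u)$ or colors on edges inside $W_1$. So $d_G^c(u)\le |C(u)|+|W_1|-1$.

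\textbf{Items (a), (b), (c).} Pick $u\in S$ and $v\in W_j$, $j\ge 2$, and sum the two bounds. Using $\sum_{u\in S}|C(u)|=r$ and $|V(T)|=|W_1|+\sum_{j\ge2}|W_j|$, I would show this sum is at most $|V(T)|$, with equality forcing each bound to be tight. The cleaner route is to sum over all $v$ in $W_2,\dots,W_{r+1}$. Since $|V(T)|=|S|+\sum_{j\ge2}|W_j|\ge |S|+r$, the bound $\delta^c\le |W_j|-1+|S|$ combined with $2\delta^c=|V(T)|$ gives
\[ |S|+\sum_{j\ge 2}|W_j| \le 2|W_j|-2+2|S| \]
for every $j\ge 2$. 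Take $j$ with $|W_j|$ minimal and compare against the analogous bound $\delta^c\le |C(u)|+|S|-1$ for $u\in S$. Averaging over $u\in S$ yields $\delta^c\le r/|S|+|S|-1$, and playing this against the previous inequality should force all $|W_j|=1$ (item (a)) and $|S|=r$, i.e.\ each color in $R$ has a distinct center (item (d)). With $|W_j|=1$ we get $|V(T)|=|S|+r=2r$, so $\delta^c=r$. Each singleton $v$ then has $d^c(v)\le |S|=r=\delta^c$, so $v$ is adjacent to all of $S$ with distinct colors (item (b)) and $d^c(v)=\delta^c$.

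\textbf{Items (c) and (e).} For $u\in S$ we have $d^c(u)\le |C(u)|+|S|-1=r=\delta^c$, so equality holds throughout. This gives $d^c(u)=\delta^c$, completing item (c). It also shows that $u$ is adjacent to all other vertices of $S$ via pairwise distinct colors not in $C(u)$. Edges inside $W_1$ carry no color of $R$ by the definition of the components. Because every vertex of $S$ sees the other $|S|-1$ vertices in distinct colors, and colors are stars, no two edges of $W_1$ share a color: two equal-colored edges must share a center, which would then see two vertices in the same color. Hence $W_1$ is a rainbow clique (item (e)).

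\textbf{Main obstacle.} The hardest step is the inequality juggling that forces (a) and (d) simultaneously. The two degree bounds must be combined carefully; I would first attempt a pairwise sum $d^c(u)+d^c(v)\ge 2\delta^c$ with $u$ maximizing $|C(u)|$, as in Claim~\ref{cl:m1-W1-equals-S}.
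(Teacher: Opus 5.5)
Your proposal is correct and is essentially the paper's argument. You sum a degree bound for a vertex $u\in S=V(W_1)$ with one for a vertex $v$ of some $W_j$, $j\ge 2$, compare against $|V(T)|=2\delta^c(G)$, and read off the equality cases. The pairwise version already closes for every $u\in S$, not only for a maximizer of $|C(u)|$: since $|C(u)|\le r-(|S|-1)$, equality is forced throughout, giving $|C(s)|=1$ for $s\neq u$ and $|W_k|=1$ for $k\notin\{1,j\}$; varying $u$ and $j$, with a one-line check when $|S|=1$ or $r=1$, then yields (a) and (d).

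Your bound $d_G^c(v)\le |W_j|-1+|S|$ is sharper than the paper's $|W_i|-1+(r-|C(w_1)|+1)$, and it is what makes (d) come out cleanly; the paper's one-line derivation of (d) from (c) leaves this neighbor count implicit. The only thing to fix is in (e): the absence of $R$-colors on the edges of $G[S]$ follows from your equality case applied at both endpoints of each edge $uu'$, together with the star structure, not from the definition of the components.
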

\begin{proof}
Since $G$ is connected and $W_i$ is a component of $G[V(T)]-\bigcup_{c\in R}E_c$ for $i\ge2$, each $W_i$ contains a vertex $w$ adjacent to $W_1$.
Fix such a vertex $w\in V(W_i)$ and a neighbor $w_1\in V(W_1)$. As before, $d_G^c(w_1)\le |W_1|-1+|C(w_1)|$ and $d_G^c(w)\le |W_i|-1+(r-|C(w_1)|+1)$.
Then we have
\[
2\delta^c(G)\le d_G^c(w_1)+d_G^c(w)\le |W_1|+|W_i|+r-1\le |V(T)|=2\delta^c(G),
\]
so all inequalities are equalities.

From $|V(T)|=|W_1|+|W_i|+r-1$ and $\sum_{j\neq i,\,2\le j\le r+1}|W_j|\ge r-1$ we get \ref{it:aa}.
By $d_G^c(w) = |W_i|-1+(r-|C(w_1)|+1)$, $w$ is adjacent to every vertex of $W_1$, proving \ref{it:bb}.
By \ref{it:bb}, we can choose any pair of vertices from $V(W_1)$ and $V(T) \setminus V(W_1)$ as $w_1$ and $w$, respectively. Since $d_G^c(w_1)+d_G^c(w)=2 \delta^c(G)$, we obtain $d_G^c(w_1)=d_G^c(w)=\delta^c(G)$. Hence, c\ref{it:cc} holds.
For \ref{it:dd}, if some vertex in $S=V(W_1)$ is the center of two colors, then choosing $w_1\in V(W_1)$ and $w\in V(T)\setminus V(W_1)$
would contradict \ref{it:cc}.
Hence all reachable colors have distinct centers in $S$, so $|S|=r$ and $|C(v)|=1$ for $v\in S$.
Finally, By $d_G^{c}(w_1)=|W_1|-1+|C(w_1)|$ for each $w_1 \in V(W_1)$ and \ref{it:bb}, $w_1$ is adjacent to all vertices in $V(W_1) \setminus \{w_1\}$ and colors of these incident edges are distinct and not contained in $C(w_1)$. 
Since every monochromatic subgraph is a star, $W_1$ is a rainbow clique and uses no color from $R$, proving \ref{it:ee}.
\end{proof}

By Claim~\ref{cl:m1-structure}\ref{it:aa} and \ref{it:dd} and $|V(T)|=|W_1|+\sum_{i=2}^{r+1}|W_i|=r+r=2r$, we get $\delta^c(G)=r$.

\begin{cl}\label{cl:m1-outside-independent}
The set $V(G)\setminus V(T)$ is independent.
\end{cl}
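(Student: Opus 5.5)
I will argue by contradiction, mirroring Claim~\ref{cl:indep} and Claim~\ref{cl:m2-outside-independent}. Suppose $V(G)\setminus V(T)$ is not independent. Since $G$ is connected, I will pick adjacent vertices $x,y\in V(G)\setminus V(T)$ such that $x$ has a neighbor $z\in V(T)$. By Claim~\ref{cl:rbmt}, $c(xz)\in R$. In the present case $V(W_1)=S$ and every color of $R$ is centered at a vertex of $S$; since $x\notin V(T)\supseteq S$ cannot be the center, the center must be $z$, so $z\in S=V(W_1)$ and $c(xz)\in C(z)$. Because every monochromatic subgraph is a star centered at $z$, I get $c(xy)\neq c(xz)$. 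Moreover, $c(xy)\notin R$, since every center of a color in $R$ lies in $S\subseteq V(T)$ and neither $x$ nor $y$ is in $V(T)$.

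Next I will build a rainbow tree $T^\ast$ on $V(T)$ minus one vertex that avoids the color $c(xz)$. By Claim~\ref{cl:m1-structure}\ref{it:dd}, we have $|S|=r$ and $|C(v)|=1$ for each $v\in S$, so $C(z)=\{c(xz)\}$. By Claim~\ref{cl:m1-structure}\ref{it:ee}, $W_1$ is a rainbow clique on $S$ using no color from $R$, so it has a rainbow spanning tree $T_1$ (for instance a star) with colors outside $R$. By Claim~\ref{cl:m1-structure}\ref{it:aa}, the components $W_2,\dots,W_{r+1}$ are $r$ singletons $w_2,\dots,w_{r+1}$. By Claim~\ref{cl:m1-structure}\ref{it:bb}, each of them is adjacent to every vertex of $S$.

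For each vertex $v\in S\setminus\{z\}$, I attach a distinct singleton $w_j$ via the edge $vw_j$. The color of $vw_j$ is in $R$, since the endpoints lie in different components $W_i$. I also need it to be the color centered at $v$, so that it is not the color of $z$. If $c(vw_j)$ were centered at $w_j$, then $w_j\in S$, which is false. So the star structure forces $c(vw_j)\in C(v)$. This gives $r-1$ pendant edges with pairwise distinct colors from $R\setminus\{c(xz)\}$. Together with $T_1$ they form a rainbow tree $T^\ast$ on $r+(r-1)=2r-1=|V(T)|-1$ vertices, omitting one singleton $w$, and $T^\ast$ avoids $c(xz)$. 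Since $c(xy)\notin R$ is not a color of $T_1$, as monochromatic classes are stars and the star of $c(xy)$ is centered at $x$ or $y$, the edge $xy$ cannot share a color with $T_1$. Hence $T^\ast+xz+xy$ is a rainbow tree on $(V(T)\setminus\{w\})\cup\{x,y\}$.

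This tree is exactly what the two-edge exchange of Algorithm~\ref{algo:rainbow-tree} searches for, with $e_0=xz$, $e=xy$, removed vertex $w\neq z$, and $c(e_0)\neq c(e)$. The matroid intersection on $G[N-w]$ with $\{e_0,e\}$ contracted would then find a common independent set of size $|N'|-1$. So the algorithm would not have terminated, a contradiction.

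The main obstacle I expect is verifying that $c(xy)$ cannot coincide with the color of an edge of $T_1$ inside $W_1$. The star-colored property handles this: the center of a shared color would have to be a common endpoint, which is impossible. The other point to check is that the orientation $c(vw_j)\in C(v)$ holds, which follows because the $w_j$ are not in $S$.
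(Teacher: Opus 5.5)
Your proof is correct and takes the same route as the paper. You pick adjacent $x,y\notin V(T)$ with $x$ adjacent to some $z\in S=V(W_1)$, build a rainbow tree on $V(T)$ minus one singleton component that avoids $c(xz)$ and $c(xy)$, and extend it by $xz$ and $xy$ to contradict the two-edge exchange step of Algorithm~\ref{algo:rainbow-tree}. Your version is in fact slightly more complete than the paper's write-up, which lists only the $r-1$ pendant edges together with $xz,xy$ and leaves implicit the rainbow spanning tree of the clique $W_1$ (colors outside $R$), which you include and check against $c(xy)$.
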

\begin{proof}
Assume for a contradiction that $V(G)\setminus V(T)$ is not independent.
Since $G$ is connected, there exist adjacent vertices $x,y\in V(G)\setminus V(T)$ such that
$x$ is adjacent to a vertex $z\in V(T)$.
By Claim~\ref{cl:rbmt}, we have $c(xz)\in R$ and $z\in S=V(W_1)$, and since each color class is a star, $c(xy)\neq c(xz)$.
By Claim~\ref{cl:m1-structure}\ref{it:bb} and \ref{it:dd}, we may choose $r-1$ independent edges between $V(W_1)$ and $V(W_2)\cup\cdots\cup V(W_r)$ whose colors are pairwise distinct and different from $c(xz)$.
Together with $xz$ and $xy$, we get a rainbow tree on $(V(T)\setminus V(W_{r+1}))\cup\{x,y\}$, contradicting the termination of Algorithm~\ref{algo:rainbow-tree}.
\end{proof}

Since $\delta^c(G)=|S|=r$, every vertex in $V(G)\setminus V(T)$ is adjacent to every vertex of $W_1$.
Together with Claims~\ref{cl:m1-structure} and \ref{cl:m1-outside-independent}, this is exactly the definition of $\cG^6$.
This completes the proof for $m=1$.
\end{document}